\newtheorem{theorem}{Theorem}
\newtheorem{corollary}{Corollary}
\begin{document}

\title[Lattice Sum Zeros:2]{Zeros of Lattice Sums: 2. A Geometry for the Generalised Riemann Hypothesis}

\author{R.C. McPhedran,\\
School of Physics, University of Sydney,\\
Sydney, NSW Australia 2006.}
  
\begin{abstract} 
 The location of zeros of the basic double sum over the square lattice is studied. This sum can be represented in terms of the product of the Riemann zeta function and the Dirichlet beta function, so that the assertion that all its non-trivial zeros lie on the critical line is a particular case of the Generalised Riemann Hypothesis (GRH). It is shown that a new necessary and sufficient condition for this special case of the GRH to hold is that a particular set of equimodular and equiargument contours of a ratio of MacDonald function double sums intersect only on the critical line. It is further shown that these contours could only intersect off the critical line on the boundary of  discrete regions of the complex plane called "inner islands". Numerical investigations are described related to this geometrical condition, and it is shown that for the first ten thousand zeros of both the zeta function and the beta function over 70\% of zeros lie outside the inner islands, and thus would be guaranteed to lie on the critical line by the arguments presented here. A new sufficient condition for the Riemann Hypothesis to hold is also presented.
\end{abstract}
\maketitle





%


\section{Introduction}
The Riemann Hypothesis (RH) that all non-trivial zeros of the function $\zeta(s)$ lie on the critical line $\Re(s)=\Re(\sigma+i t)=1/2$ is widely regarded as one of the most important and difficult unsolved problems in mathematics\cite{tandhb}. The Generalised Riemann Hypothesis (GRH) that non-trivial zeros of Dirichlet $L$ functions
with integer characters also lie on the critical line has also been widely investigated. The results we present below consist of a number of numerical and analytic investigations of a particular case of the GRH, pertaining to the most important double sum of the Epstein zeta type:
\begin{equation}
S_0(s;\lambda)=\sum_{p_1,p_2}' \frac{1}{(p_1^2+p_2^2 \lambda^2)^s},
\label{et1}
\end{equation}
where the sum over the integers $p_1$ and $p_2$ runs over all integer pairs, apart from $(0,0)$, as indicated by the superscript prime. The quantity $\lambda$ corresponds to the period ratio of the
rectangular lattice, and $s$ is an arbitrary complex number. For $\lambda^2$ an integer, this is an Epstein zeta function, but for $\lambda^2$ non-integer we will refer to it as a lattice sum over the rectangular lattice. Many results connected with lattice  sums of this and more general forms have been collected in the recent book {\em Lattice Sums Then and Now}\cite{lsb}, hereafter denoted {\em LSTN}. For $\lambda=1$, the sum (\ref{et1}) takes a simple form for which the GRH is applicable:
\begin{equation}
S_0(s;1)=4 \zeta(s) L_{-4}(s),
\label{et1a}
\end{equation}
using the notation of Zucker and Robertson \cite{zandr} for Dirichlet $L$ functions.
For $\lambda\neq 1$, in general $S_0(\lambda, s)$ will have non-trivial zeros off the critical line, as was discussed in a previous article\cite{part1} (hereafter referred to as I).

Bogomolny and  Leboeuf\cite{bandl} have  discussed the distribution and  separation of zeros for $S_0( s;1)$, finding that the product form (\ref{et1a}) of this basic sum resulted in
a distribution of zeros with higher probability of smaller gaps than for individual Dirichlet $L$ functions.
Numerical investigations of the distribution and separation of zeros of  more general Epstein zeta functions have been discussed by Hejhal\cite{hejhal1}, and by Bombieri and Hejhal \cite{hejhal2}. Such investigations are difficult for large $t$ even on the most powerful available computers, due to the number of terms required in the most convenient general expansion for the functions (see Section 2) and the degree of cancellation between terms.

In this paper, we will concentrate on the case of the square lattice ($\lambda=1$), but will also use results from I\cite{part1} in the limit as $\lambda\rightarrow 1$. We hope to demonstrate that the context of double sums and rectangular lattices is richer than that of single sums like that in the Riemann zeta function, and that this greater richness offers extra opportunities for the development of analytic arguments relating to the RH and the GRH. The results will be accompanied by proofs which may not attain the fullest contemporary degree of rigour, but which may hopefully inspire other workers to remedy this defect. The results have been obtained on the basis of extensive numerical investigations, and some typical graphical examples will be presented. It should be stressed that it is not overly difficult for the expressions  presented below to be employed in appropriate symbolic software by those interested in their own explorations of the geometric context we describe.

Section 2 contains essential results from I, both in their form for general $\lambda$ and for $\lambda=1$. These are used in Sections 3 and 4 to prove significant results for double sums, including the division of the complex $s$ plane into extended regions (running from $\sigma=-\infty$ to $\sigma=\infty$), discrete island regions (with bounded variation in $\sigma$ and $t$) and inner island regions within the latter. The most important result is that all zeros of $S_0(s;1)$ not lying on the boundaries between island and inner island regions
must lie on the critical line. Graphs are given in Section 4 showing typical configurations of the three regions, and tabular data is given in Section 5 on the distribution various types of zeros among the three regions.

\section{Rectangular and Square Lattice Sums}
The double sums we consider are, for the rectangular lattice, analytic in the complex variable $s$, and depend on the real parameter $\lambda$. They reduce to sums for the square lattice when $\lambda$ tends to unity. For brevity of notation, we will sometimes omit the parameter $\lambda$ when it takes the value unity. We will also indicate the partial derivative with respect to $\lambda$ by attaching this symbol as a subscript to the function name.

Connected to the double sum  (\ref{et1}) is a  general class of MacDonald function double sums for rectangular lattices:
\begin{equation}
{\cal K}(n,m;s;\lambda)=\pi^n\sum_{p_1,p_2=1}^\infty  \left(\frac{p_2^{s-1/2+n}}{p_1^{s-1/2-n}}\right) K_{s-1/2+m}(2\pi p_1 p_2\lambda).
\label{mac1}
\end{equation}
For $\lambda\geq 1$ and the (possibly complex) number $s$ small in magnitude, such sums converge rapidly, facilitating numerical evaluations. (The sum gives accurate answers
as soon as the argument of the MacDonald function exceeds the modulus of its order by  a factor of 1.3 or so.) The double sums satisfy the following symmetry relation, obtained by interchanging $p_1$ and $p_2$ in the definition (\ref{mac1}):
\begin{equation}
{\cal K}(n,-m;s;\lambda)={\cal K}(n,m;1-s;\lambda).
\label{mac1a}
\end{equation}

The lowest order sum ${\cal K}(0,0;s;\lambda)$ occurs in the representation of $S_0(s;\lambda)$ due to Kober\cite{kober}:
\begin{equation}
\lambda^{s+1/2} \frac{\Gamma(s)}{8\pi^s} S_0(\lambda;s)=\frac{1}{4} \frac{\xi_1(2 s)}{\lambda^{s-1/2}}+\frac{1}{4} \lambda^{s-1/2} \xi_1(2 s-1)+{\cal K}(0,0;s;\frac{1}{\lambda}).
\label{mac2}
\end{equation}
Here $\xi_1(s)$ is the symmetrised zeta function. 
In terms of the Riemann zeta function, (\ref{mac2}) is
\begin{equation}
S_0(s;\lambda)=\frac{2 \zeta (2 s)}{\lambda^{2 s}}+2\sqrt{\pi}\frac{\Gamma(s-1/2) \zeta(2 s-1)}{\Gamma(s)\lambda}+
\frac{8\pi^s}{\Gamma(s) \lambda^{s+1/2}}{\cal K}(0,0;s;\frac{1}{\lambda}).
\label{mac2a}
\end{equation}

A fully symmetrised form of (\ref{mac2}) (symmetric under both $s\rightarrow 1-s$ and $\lambda\rightarrow1/\lambda$)  is:
\begin{equation}
\lambda^{s} \frac{\Gamma(s)}{8\pi^s} S_0(s;\lambda)={\cal T}_+(s;\lambda)+\frac{1}{\sqrt{\lambda}}  {\cal K}(0,0;s;\frac{1}{\lambda}),
\label{mac2s}
\end{equation}
where
\begin{equation}
{\cal T}_+(s;\lambda)=\frac{1}{4}\left[\frac{\xi_1(2 s)}{\lambda^s}+\frac{\xi_1(2 s-1)}{\lambda^{1-s}}\right].
\label{mac2s1}
\end{equation}
Note that ${\cal T}_+( 1-s;\lambda)={\cal T}_+(s;\lambda)$ and $ {\cal K}(0,0;1-s;\lambda)={\cal K}(0,0;s;\lambda)$, so that the left-hand side of equation (\ref{mac2s}) must then be unchanged under replacement of $s$ by $1-s$. The left-hand side is also unchanged under replacement of $\lambda$ by $1/\lambda$, so the same is true for the
sum of the two terms on the right-hand side, although in general it will not be true for them individually. The symmetry relations for $S_0(s;\lambda)$ then are
\begin{equation}
\lambda^{s} \frac{\Gamma(s)}{8\pi^s} S_0(s;\lambda)=\frac{1}{\lambda^{s}} \frac{\Gamma(s)}{8\pi^s} S_0\left(s;\frac{1}{\lambda}\right)=
\lambda^{1-s} \frac{\Gamma(1-s)}{8\pi^{(1-s)}} S_0(1-s;\lambda)=\frac{1}{\lambda^{1-s}} \frac{\Gamma(1-s)}{8\pi^{(1-s)}} S_0\left(1-s;\frac{1}{\lambda}\right).
\label{mac2s2}
\end{equation}
From the equations (\ref{mac2s2}), if $s_0$ is a zero of $S_0(s;\lambda)$ then
\begin{equation}
S_0(s_0;\lambda)=0~ \implies~S_0(s_0;1/\lambda)=0=S_0(1-s_0;1/\lambda)=S_0(1-s_0;\lambda).
\label{mac2s2a}
\end{equation}
Another interesting deduction from (\ref{mac2s}) relates to the derivative of $S_0(\lambda, s_0)$ with respect to $\lambda$:
\begin{eqnarray}
& &\lambda^s S_0(s;\lambda) =\frac{1}{\lambda^s} S_0\left(s;\frac{1}{\lambda}\right)~\implies~\nonumber \\
&& s \lambda^{s-1} S_0(s;\lambda) +\lambda^s \frac{\partial}{\partial \lambda} S_0(s;\lambda) =\frac{-s}{\lambda^{s+1}} S_0\left(s;\frac{1}{\lambda}\right)-
\frac{1}{\lambda^{s+2}}\frac{\partial}{\partial \lambda}  S_0\left(s;\frac{1}{\lambda}\right),
\label{mac2s2b}
\end{eqnarray}
so that
\begin{equation}
\left. \frac{\partial}{\partial \lambda} S_0(s;\lambda)\right|_{\lambda=1}=S_{0,\lambda}(s;1) =-s S_0(s;1).
\label{mac2s2c}
\end{equation}

Combining (\ref{mac2s}) and (\ref{mac2s2}), we arrive at a general symmetry relationship  for ${\cal K}(0,0;s;\lambda)$:
\begin{equation}
{\cal T}_+(s;\lambda)-{\cal T}_+\left(s; \frac{1}{\lambda} \right)=\sqrt{\lambda}{\cal K}(0,0;s;\lambda)-\frac{1}{\sqrt{\lambda}}  {\cal K}\left(0,0;s;\frac{1}{\lambda}\right),
\label{mac2s3}
\end{equation}
or
\begin{eqnarray}
&&\frac{1}{4}\left[\xi_1(2 s)\left(\frac{1}{\lambda^{s}}-\lambda^s\right) +\xi_1(2 s-1)\left(\frac{1}{\lambda^{1-s}}-\lambda^{1-s}\right) \right]=\nonumber\\
&&\sqrt{\lambda}{\cal K}(0,0;s;\lambda)-\frac{1}{\sqrt{\lambda}}  {\cal K}\left(0,0;s;\frac{1}{\lambda}\right).
\label{mac2s4}
\end{eqnarray}
This identity holds for all values of $s$ and $\lambda$. One use of it is to expand about $\lambda=1$, which gives identities for the partial derivatives of 
${\cal K}(0,0;s;\lambda)$ with respect to $\lambda$, evaluated  at $\lambda=1$. The first of these is
\begin{equation}
{\cal L}(s)=s\xi_1(2 s)+(1-s) \xi_1(2 s-1)=-2 {\cal K}(0,0;s;1)-4 {\cal K}_\lambda(0,0;s;1).
\label{mac2s5}
\end{equation}
All three functions occurring in (\ref{mac2s5}) are even under $s\rightarrow 1-s$.

By analogy to  the equation (\ref{mac2s1}) we define:
\begin{equation}
{\cal T}_-(s;\lambda)=\frac{1}{4}\left[\frac{\xi_1(2 s)}{\lambda^s}-\frac{\xi_1(2 s-1)}{\lambda^{1-s}}\right].
\label{mac2s1a}
\end{equation}
This function is odd under $s\rightarrow 1-s$.

It is known\cite{prt,hejhal3,ki,lagandsuz,mcp13} that ${\cal T}_+(s;\lambda)$ and ${\cal T}_-(s;\lambda)$ have all their zeros on the critical line if $\lambda\leq 1$ and $t> 3.9125$. This can be easily seen from the properties of the function
\begin{equation}
{\cal U}(s)=\frac{\xi_1(2 s-1)}{\xi_1(2 s)},
\label{Udef}
\end{equation}
which has modulus smaller than unity to the right of the critical line (where its numerator has its zeros)  and less than unity to its left (where its denominator has its zeros) for  $t> 3.9125$. By contrast, ${\cal K}(0,0;s)$ has zeros both on the critical line and off it\cite{mcp04}. Also, Lagarias and Suzuki\cite{lagandsuz} have proved that the function we denote by ${\cal L}(s)$ has all its zeros on the critical line. ( We may understand this result  since ${\cal L}(s)=0$ if and only if $-1={\cal U}(s) (1-s)/s$. The modulus of the right-hand side is smaller than unity for $\sigma>1/2$, and larger than unity for $\sigma<1/2$.)

In addition to the function ${\cal U}(s)$, we will employ a closely associated function
\begin{equation}
{\cal V}(s)=\frac{{\cal T}_+(s)}{{\cal T}_-(s)}=\frac{1+{\cal U}(s)}{1-{\cal U}(s)}.
\label{Vdef}
\end{equation}
${\cal V}(s)$ is purely imaginary on the critical line and ${\cal U}(s)$ has modulus unity there. The fixed points of the transformation (\ref{Vdef}) are ${\cal V}(s)={\cal U}(s)=\pm i$,
and its normal form is
\begin{equation}
\frac{{\cal V}(s)-i}{{\cal V}(s)+i}=i\left(\frac{{\cal U}(s)-i}{{\cal U}(s)+i}\right).
\label{VUform}
\end{equation}
\section{Properties Related to ${\cal K}(1,1;s;\lambda)$}
The recurrence relations for MacDonald functions give rise to those for the double sums:
\begin{equation}
\frac{\partial}{\partial \lambda} {\cal K}(n,m;s;\lambda)=-[{\cal K}(n+1,m+1;s;\lambda)+{\cal K}(n+1,m-1;s;\lambda)],
\label{mac3}
\end{equation}
and
\begin{equation}
\frac{(m+s-1/2)}{\lambda} {\cal K}(n,m;s;\lambda)=[{\cal K}(n+1,m+1;s;\lambda)-{\cal K}(n+1,m-1;s;\lambda)].
\label{mac4}
\end{equation}
These may be used to construct operators which raise $n$ and lower $m$, or raise $n$ and raise $m$, respectively:
\begin{equation}
-\frac{1}{2}\left[ \frac{\partial}{\partial \lambda}+\frac{(m+s-1/2)}{\lambda}  \right] {\cal K}(n,m;s;\lambda)= {\cal K}(n+1,m-1;s;\lambda),
\label{mac5}
\end{equation}
and
\begin{equation}
-\frac{1}{2}\left[ \frac{\partial}{\partial \lambda}-\frac{(m+s-1/2)}{\lambda}  \right] {\cal K}(n,m;s;\lambda)= {\cal K}(n+1,m+1;s;\lambda).
\label{mac5a}
\end{equation}

From (\ref{mac5a}) we have
\begin{equation}
{\cal K}(1,1;s;\lambda)=-\frac{1}{2}{\cal K}_\lambda (0,0;s;\lambda)+\frac{(s-1/2)}{2 \lambda}{\cal K}(0,0;s;\lambda)
\label{K11def}
\end{equation}
The symmetric and antisymmetric parts of ${\cal K}(1,1;s;\lambda)$ are
\begin{equation}
{\cal K}(1,1;s;\lambda)+{\cal K}(1,1;1-s;\lambda)=-{\cal K}_\lambda (0,0;s;\lambda),
\label{K11sym}
\end{equation}
and
\begin{equation}
{\cal K}(1,1;s;\lambda)-{\cal K}(1,1;1-s;\lambda)=\frac{(s-1/2)}{ \lambda}{\cal K}(0,0;s;\lambda).
\label{K11asym}
\end{equation}

It is  useful to define
\begin{equation}
{\cal V}_{\cal K} (1,1;s;\lambda)=\frac{{\cal K}(1,1;s;\lambda)-{\cal K}(1,1;1-s;\lambda)}{{\cal K}(1,1;s;\lambda)+{\cal K}(1,1;1-s;\lambda)},
\label{etf24}
\end{equation} 
and
\begin{equation}
{\cal U}_{\cal K} (1,1;s;\lambda)=\frac{{\cal K}(1,1;s;\lambda)}{{\cal K}(1,1;1-s;\lambda)}=\left(\frac{1+{\cal V}_{\cal K} (1,1;s;\lambda)}{1-{\cal V}_{\cal K} (1,1;s;\lambda)}\right).
\label{etf24a}
\end{equation}
From equations (\ref{K11sym},\ref{K11asym}),
\begin{equation}
{\cal V}_{\cal K} (1,1;s;\lambda)=\frac{-(s-1/2){\cal K}(0,0;s;\lambda)}{\lambda {\cal K}_\lambda (0,0;s;\lambda)}=\frac{-(s-1/2)}{\lambda \partial {\log\cal K}(0,0;s;\lambda)/\partial \lambda}.
\label{etf25}
\end{equation}
From equations (\ref{etf24a}) and (\ref{etf25}), we have the symmetry relations
\begin{equation}
{\cal U}_{\cal K} (1,1;1-s;\lambda)=\frac{1}{{\cal U}_{\cal K} (1,1;s;\lambda)}, ~{\cal V}_{\cal K} (1,1;1-s;\lambda)=-{\cal V}_{\cal K} (1,1;s;\lambda).
\label{symuv}
\end{equation}

In what follows, we will abbreviate the notation for the sums ${\cal K}$ and their $\lambda$ derivatives by suppressing the entry for the geometric parameter $\lambda$ when it takes the value unity. We will do the same for ${\cal S}_0(s;\lambda)$.

\section{ ${\cal K}(1,1;s)$ and the Zeros of $S_0(s)$}
From equations (\ref{K11sym}, \ref{K11asym}, \ref{mac2s}) we find:
\begin{equation}
{\cal K}(1,1;s)= s\left[  \frac{\Gamma(s)}{16\pi^s} S_0(s)\right]-(s-1/2) \left[\frac{\xi_1(2s-1)}{4} \right] .
\label{n26}
\end{equation}
The functions ${\cal U}_{\cal K} (1,1;s;\lambda)$, ${\cal V}_{\cal K} (1,1;s;\lambda)$ become in the special case $\lambda=1$
\begin{equation}
{\cal U}_{\cal K} (1,1;s)=\left[\frac{ \frac{1}{( s-1/2)}  \log {\cal K}_\lambda(0,0;s)-1}
{ \frac{1}{( s-1/2)}  \log {\cal K}_\lambda(0,0;s)+1}\right],
\label{srep4}
\end{equation}
or equivalently
\begin{equation}
{\cal U}_{\cal K} (1,1;s)=\left(\frac{s}{1-s}\right)\frac{\left[\frac{\Gamma(s) C(0,1;s)}{\pi^s}-4\left(1-\frac{1}{2 s}\right)\xi_1(2 s-1)\right]}
{\left[\frac{\Gamma(s) C(0,1;s)}{\pi^s}-4\left(1-\frac{1}{2-2  s}\right)\xi_1(2 s)\right]},
\label{srep4a}
\end{equation}
and
\begin{equation}
{\cal V}_{\cal K} (1,1;s;1)=\frac{-(s-1/2){\cal K}(0,0;s)}{ {\cal K}_\lambda(0,0;s)}
=\frac{-(s-1/2)}{ \log{\cal K}_\lambda(0,0;s)}.
\label{srep5}
\end{equation}
or
\begin{equation}
{\cal V}_{\cal K} (1,1;s;1)=\frac{\Gamma (s) S_0(1,s)/(2 \pi^s)-[\xi_1(2 s)+\xi_1(2 s-1)]}{\Gamma (s) S_0(1,s)/(4 \pi^s (s-1/2))+[\xi_1(2 s)-\xi_1(2 s-1)]}.
\label{srep6}
\end{equation}
Returning to the equation (\ref{mac2s5}), we know\cite{lagandsuz} that its left-hand side has no zeros off the critical line. From the right-hand side, the equivalent statement is
that
\begin{equation}
\frac{ {\cal K}_\lambda(0,0;s)}{{\cal K}(0,0;s)}=-\frac{1}{2}\implies \sigma=\frac{1}{2}.
\label{srep9}
\end{equation}
When equation (\ref{srep9}) is satisfied, we have from (\ref{srep4}) that
\begin{equation}
{\cal U}_{\cal K} (1,1;s)=\frac{s}{1-s}=-{\cal U}(s),
\label{srep10}
\end{equation}
so that zeros of the function ${\cal L}(s)$ giving the left-hand side of equation (\ref{mac2s5}) satisfy the same equation (${\cal U}_{\cal K} (1,1;s;1)=-{\cal U}(1,s)$) as do solutions of $S_0(1,s)=0$.

Let
\begin{equation}
{\cal F}(s)=\frac{{\cal U}_{\cal K} (1,1;s)}{{\cal U}(s)}.
\label{mthm1}
\end{equation}
Then the relationship between  ${\cal K}(1,1;s)$ and the zeros of $S_0(s)$ is established in the following result.
\begin{theorem}
If $S_0(s_0)=0$ then ${\cal F}(s_0)=-1$. If ${\cal F}(s_0)=-1$ then either ${\cal L}(s_0)=0$, in which case $s_0$ must lie on the critical line, or $S_0(s_0)=0$.
\label{mthm}
\end{theorem}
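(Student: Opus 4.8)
The plan is to reduce both implications to the single representation (\ref{n26}) of ${\cal K}(1,1;s)$ in terms of $S_0(s)$ and $\xi_1(2s-1)$, together with the companion expression for ${\cal K}(1,1;1-s)$ obtained by sending $s\to 1-s$. Writing $A(s)=\Gamma(s)S_0(s)/(16\pi^s)$, the symmetry relations (\ref{mac2s2}) at $\lambda=1$ give $A(1-s)=A(s)$, while $\xi_1(2(1-s)-1)=\xi_1(1-2s)=\xi_1(2s)$ by the functional equation for the symmetrised zeta. Hence
\begin{equation}
{\cal K}(1,1;s)=s\,A(s)-(s-1/2)\frac{\xi_1(2s-1)}{4},\qquad {\cal K}(1,1;1-s)=(1-s)\,A(s)+(s-1/2)\frac{\xi_1(2s)}{4}.
\label{plan1}
\end{equation}
Everything then follows by forming the ratio ${\cal U}_{\cal K}(1,1;s)={\cal K}(1,1;s)/{\cal K}(1,1;1-s)$ and comparing it with ${\cal U}(s)=\xi_1(2s-1)/\xi_1(2s)$.

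For the first implication I would note that $S_0(s_0)=0$ forces $A(s_0)=0$, and by (\ref{mac2s2a}) also $S_0(1-s_0)=0$, so that the $A$-terms drop out of both expressions in (\ref{plan1}). What remains is ${\cal K}(1,1;s_0)=-(s_0-1/2)\xi_1(2s_0-1)/4$ and ${\cal K}(1,1;1-s_0)=(s_0-1/2)\xi_1(2s_0)/4$, whose quotient is exactly $-\xi_1(2s_0-1)/\xi_1(2s_0)=-{\cal U}(s_0)$. Thus ${\cal U}_{\cal K}(1,1;s_0)=-{\cal U}(s_0)$, i.e. ${\cal F}(s_0)=-1$.

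For the converse, the hypothesis ${\cal F}(s_0)=-1$ reads ${\cal K}(1,1;s_0)\,\xi_1(2s_0)=-{\cal K}(1,1;1-s_0)\,\xi_1(2s_0-1)$. Substituting (\ref{plan1}) and expanding, I expect the two cross-terms proportional to $(s_0-1/2)\xi_1(2s_0)\xi_1(2s_0-1)$ to cancel identically, leaving
\begin{equation}
A(s_0)\bigl[s_0\,\xi_1(2s_0)+(1-s_0)\,\xi_1(2s_0-1)\bigr]=A(s_0)\,{\cal L}(s_0)=0,
\label{plan2}
\end{equation}
where the bracket is recognised as ${\cal L}(s_0)$ through (\ref{mac2s5}). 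Since $\Gamma(s_0)/\pi^{s_0}$ is nonzero in the strip of interest, the factor $A(s_0)$ vanishes precisely when $S_0(s_0)=0$; the only alternative is ${\cal L}(s_0)=0$, which by the Lagarias--Suzuki theorem \cite{lagandsuz} forces $s_0$ onto the critical line. This yields the claimed dichotomy.

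The step I expect to be delicate is the cancellation producing (\ref{plan2}): it hinges on the precise coefficients in (\ref{n26}) and on the two symmetries $A(1-s)=A(s)$ and $\xi_1(2s-1)\leftrightarrow\xi_1(2s)$ under $s\to 1-s$, so I would verify it by careful bookkeeping rather than treat it as automatic. A secondary point to confirm is that the poles of $\Gamma$ and any common zeros of ${\cal K}(1,1;s_0)$ and ${\cal K}(1,1;1-s_0)$, where ${\cal F}$ is a priori indeterminate, are excluded from the range of $s_0$ associated with the non-trivial zeros, so that ${\cal U}_{\cal K}(1,1;s_0)$ is well defined throughout the argument.
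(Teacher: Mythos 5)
Your proof is correct and rests on the same two pillars as the paper's — Kober's representation in the form (\ref{n26}) together with the identity ${\cal L}(s)=s\,\xi_1(2s)+(1-s)\,\xi_1(2s-1)$, and the Lagarias--Suzuki theorem to place zeros of ${\cal L}$ on the critical line — but the bookkeeping is organised differently. The paper works in the ${\cal V}$-variables: it writes $\tilde S_0(s)$ as a rational function of ${\cal V}_{\cal K}(1,1;s)+{\cal V}(s)$ with ${\cal L}(s)$ appearing in the denominator (equations (\ref{mthm3})--(\ref{mthm4})), so that ${\cal V}_{\cal K}=-{\cal V}$ kills the numerator unless ${\cal L}(s_0)=0$ produces an indeterminacy. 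You instead cross-multiply ${\cal U}_{\cal K}(1,1;s_0)=-{\cal U}(s_0)$ directly using (\ref{n26}) and its image under $s\to 1-s$, and the cancellation of the cross-terms does go through exactly as you anticipate, yielding the clean product identity $A(s_0)\,{\cal L}(s_0)=0$; this is arguably more transparent, since it avoids any division by quantities that might vanish. The one thing the paper's version delivers that yours does not is the \emph{exclusivity} of the two alternatives: from (\ref{mthm3}) the paper shows that ${\cal L}(s_0)=0$ forces $\tilde S_0(s_0)=-(2s_0-1){\cal T}_-(s_0)\neq 0$, whereas your identity only gives the inclusive disjunction. Since the theorem as stated requires only ``either \ldots or,'' this does not affect correctness. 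Your closing caveats (poles of $\Gamma$, simultaneous vanishing of ${\cal K}(1,1;s_0)$ and ${\cal K}(1,1;1-s_0)$ making ${\cal F}$ indeterminate) are legitimate; the paper handles the latter only via its Theorem~2, and glosses over the former exactly as you do.
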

\begin{proof}
At a zero $s_0$ of $S_0(s)$, we have from (\ref{mac2s}) that ${\cal T}_+(s_0)=-{\cal K}(0,0;s_0)$. For general $s$, from (\ref{mac2s5}),
\begin{equation}
{\cal L}(s)=2{\cal T}_+(s)+2(2 s-1) {\cal T}_-(s)=-2 {\cal K}(0,0;s)-4 {\cal K}_\lambda(0,0;s).
\label{mac2s5a}
\end{equation}
Hence, ${\cal K}_\lambda(0,0;s_0)=-(s_0-1/2){\cal T}_-(s_0)$. From (\ref{etf25}), 
\begin{equation}
{\cal V}_{\cal K} (1,1;s_0)=-\frac{{\cal T}_+(s_0)}{{\cal T}_-(s_0)}=-{\cal V}(s_0), 
\label{mthm1a}
\end{equation}
so that from (\ref{etf24a}) and (\ref{Vdef}),
\begin{equation}
{\cal U}_{\cal K} (1,1;s_0)=-{\cal U}(s_0). 
\label{mthm2}
\end{equation}

Consider now the case  ${\cal F}(s_0)=-1$, for which ${\cal V}_{\cal K} (1,1;s_0)=-{\cal V}(s_0)$ . Define $\tilde{S}_0(s)=\Gamma (s) S_0(s)/(8 \pi^s)$. Then from (\ref{etf25}) and (\ref{mac2s}),
\begin{equation}
\tilde{S}_0(s)=\frac{{\cal V}_{\cal K} (1,1;s){\cal T}_-(s)+{\cal T}_+(s)}{1-{\cal V}_{\cal K} (1,1;s)/(2 s-1)}=\frac{{\cal T}_-(s)({\cal V}_{\cal K} (1,1;s)+{\cal V}(s))}{1-{\cal V}_{\cal K} (1,1;s)/(2 s-1)}.
\label{mthm3}
\end{equation}
This can also be written as
\begin{equation}
\tilde{S}_0(s)=\frac{2 (2 s-1) {\cal T}_-(s)^2 [{\cal V}_{\cal K} (1,1;s)+{\cal V}(s)]}{{\cal L}(s)-2 {\cal T}_-(s)[{\cal V}_{\cal K} (1,1;s)+{\cal V}(s)]}.
\label{mthm4}
\end{equation}
Hence ${\cal V}_{\cal K} (1,1;s_0)=-{\cal V}(s_0)$ guarantees  $\tilde{S}_0(s)=0$, unless ${\cal L}(s_0)=0$, in which case $\tilde{S}_0(s_0)=- (2 s_0-1) {\cal T}_-(s_0)$.
Note that if ${\cal L}(s_0)=0$, from (\ref {mac2s5a}) we know that $ {\cal T}_+(s_0)\neq 0$ and  $ {\cal T}_-(s_0)\neq 0$, since these two functions have no zeros in common.
Hence, if  ${\cal L}(s_0)=0$ then $\tilde{S}_0(s_0)\neq 0$.
\end{proof}

Another easily proved result is the following:
\begin{theorem} ${\cal K}(1,1;s)$ and  ${\cal K}(1,1;1-s)$ are not simultaneously zero for $s$ off the critical line. ${\cal K}(0,0;s)$ and  ${\cal K}_{\lambda}(0,0;s)$ are not simultaneously zero for $s$ off the critical line.
\end{theorem}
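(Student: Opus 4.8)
The plan is to exploit the fact that the two assertions of the theorem are, away from the critical line, equivalent to one another, and that both collapse onto the Lagarias--Suzuki result that ${\cal L}(s)$ has all of its zeros on the critical line. The two linear relations (\ref{K11sym}) and (\ref{K11asym}), specialized to $\lambda=1$, are what let me pass freely between the pair $({\cal K}(1,1;s),{\cal K}(1,1;1-s))$ and the pair $({\cal K}(0,0;s),{\cal K}_\lambda(0,0;s))$, so I would establish the statement about the latter pair first and then deduce the former from it.

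First I would treat ${\cal K}(0,0;s)$ and ${\cal K}_\lambda(0,0;s)$, which carries all the content. Suppose for contradiction that $s_0$ lies off the critical line with ${\cal K}(0,0;s_0)={\cal K}_\lambda(0,0;s_0)=0$. Specializing the identity (\ref{mac2s5}) to $s_0$ gives at once ${\cal L}(s_0)=-2{\cal K}(0,0;s_0)-4{\cal K}_\lambda(0,0;s_0)=0$. Since ${\cal L}(s)$ has all its zeros on the critical line, this forces $\sigma_0=1/2$, contradicting the hypothesis that $s_0$ is off the critical line. This disposes of the second assertion.

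For the first assertion I would start from a hypothetical common zero ${\cal K}(1,1;s_0)={\cal K}(1,1;1-s_0)=0$ with $s_0$ off the critical line, and feed it into the symmetric/antisymmetric decomposition. The sum relation (\ref{K11sym}) then yields ${\cal K}_\lambda(0,0;s_0)=0$, while the difference relation (\ref{K11asym}) yields $(s_0-1/2){\cal K}(0,0;s_0)=0$. Because $s_0$ is off the critical line, its real part is not $1/2$, so $s_0-1/2\neq 0$ and hence ${\cal K}(0,0;s_0)=0$ as well. This returns us precisely to the configuration already excluded, so the first assertion follows from the second. Running the same two relations in the opposite direction shows that the two simultaneous-vanishing conditions are in fact equivalent for $s_0$ off the critical line.

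I do not expect a genuine obstacle here: the substantive input is entirely the cited Lagarias--Suzuki theorem, and the algebra is a two-by-two change of variables. The only point requiring care will be the invertibility of that change of variables, which degenerates exactly at $s_0=1/2$ through the factor $s_0-1/2$ in (\ref{K11asym}); I would simply note that this factor has real part $\sigma_0-1/2\neq 0$ once $s$ is restricted to lie off the critical line, so the degenerate case never arises.
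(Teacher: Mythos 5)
Your proof is correct and follows essentially the same route as the paper: both reduce the simultaneous vanishing of ${\cal K}(1,1;s)$ and ${\cal K}(1,1;1-s)$ via (\ref{K11sym}) and (\ref{K11asym}) to that of ${\cal K}(0,0;s)$ and ${\cal K}_\lambda(0,0;s)$, and then invoke (\ref{mac2s5}) together with the Lagarias--Suzuki theorem on ${\cal L}(s)$. Your explicit remark that the factor $s_0-1/2$ is nonzero off the critical line is a small point of care the paper leaves implicit.
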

\begin{proof}
From equations (\ref{K11sym}) and (\ref{K11asym}), if ${\cal K}(1,1;s)$ and  ${\cal K}(1,1;1-s)$ are both zero for some $s$, then so are ${\cal K}(0,0;s)$ and  ${\cal K}_{\lambda}(0,0;s)$.
From (\ref{mac2s5a}) we then have ${\cal L}(s)=0$, so that $s$ must lie on the critical line.
\end{proof}
Note that if ${\cal K}(1,1;s)$ and  ${\cal K}(1,1;1-s)$ are equal, then ${\cal K}(0,0;s)$ has to be zero, which can occur for $s$ either on or off the critical line. If ${\cal K}(1,1;s)$ and  $-{\cal K}(1,1;1-s)$ are equal, then ${\cal K}_\lambda (0,0;s)$ has to be zero. It is suspected that this can only occur for $s$ on the critical line, but a proof of this would be valuable.

We next consider that behaviour of  ${\cal U}_{\cal K}(1,1;s)$ in the complex plane for $t$ not small. The theorem which follows shows that this function well away from the critical line
has the opposite behaviour to ${\cal U}(s)$. The latter is smaller than unity in magnitude to the right of the critical line, and larger in magnitude than unity to the left of it. The former has magnitude which increases without bound for $\sigma$ moving well to the right of the critical line, and tends towards zero as $\sigma$ moves well to the left of the critical line.

 \begin{theorem} The function ${\cal U}_{\cal K}(1,1;s)$ has "island" regions defined by boundaries inside which its modulus is less than unity, and outside which it is greater than unity.
  It has  monotonic argument variation around each side of island regions surrounding intervals of the critical line. 
 \label{thmisland}
 \end{theorem}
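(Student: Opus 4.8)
The plan is to analyse the single holomorphic function $f(s)=\log{\cal U}_{\cal K}(1,1;s)$, writing $g=\real f=\log|{\cal U}_{\cal K}(1,1;s)|$ and $\phi=\imag f=\arg{\cal U}_{\cal K}(1,1;s)$, so that $g$ is harmonic and $\phi$ its conjugate wherever ${\cal K}(1,1;s)\,{\cal K}(1,1;1-s)\neq0$. First I would record two symmetries. Because the double sums are real on the real axis, Schwarz reflection gives ${\cal K}(1,1;\bar s)=\overline{{\cal K}(1,1;s)}$; combined with the reflection relation (\ref{symuv}) this yields $g(1-\bar s)=-g(s)$, so $g$ is odd under reflection in the critical line. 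In particular, on the critical line $1-s=\bar s$, whence ${\cal U}_{\cal K}(1,1;s)={\cal K}(1,1;s)/\overline{{\cal K}(1,1;s)}$ has modulus unity and $g\equiv0$ there, already fixing $|{\cal U}_{\cal K}|=1$ on the line.

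Next I would pin down the sign of $g$ far from the line. Eliminating ${\cal K}_\lambda(0,0;s)$ between (\ref{K11def}) and (\ref{mac2s5}) gives the exact representation ${\cal K}(1,1;s)=\tfrac{s}{2}{\cal K}(0,0;s)+\tfrac18{\cal L}(s)$, and, reflecting and using that ${\cal K}(0,0;s)$ and ${\cal L}(s)$ are even under $s\to1-s$, ${\cal K}(1,1;1-s)=\tfrac{1-s}{2}{\cal K}(0,0;s)+\tfrac18{\cal L}(s)$, so that ${\cal U}_{\cal K}(1,1;s)=\bigl[4s\,{\cal K}(0,0;s)+{\cal L}(s)\bigr]/\bigl[4(1-s)\,{\cal K}(0,0;s)+{\cal L}(s)\bigr]$. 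The one asymptotic input I need is ${\cal K}(0,0;s)\sim{\cal L}(s)/(4s)$ as $\sigma\to+\infty$ (both sides $\sim\Gamma(s)/(4\pi^{s})$ from the standard large-$\sigma$ behaviour of $\zeta$, $\Gamma$ and $S_0$): the leading terms then cancel in the denominator but not the numerator, giving ${\cal U}_{\cal K}(1,1;s)\sim 2s$, so $g\to+\infty$ as $\sigma\to+\infty$, and by the odd symmetry $g\to-\infty$ as $\sigma\to-\infty$. This is the advertised ``opposite behaviour to ${\cal U}(s)$.''

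With $g\equiv0$ on the critical line, $g>0$ far to the right and $g<0$ far to the left, I would extract the islands from the maximum principle. Since $g$ is harmonic away from the logarithmic singularities, it has no interior extrema except at zeros of ${\cal K}(1,1;s)$ (where $g\to-\infty$) and of ${\cal K}(1,1;1-s)$ (where $g\to+\infty$); by the non-coincidence result proved just above, these never collide off the critical line. Each off-line zero of ${\cal K}(1,1;s)$ lying to the right therefore sits at the bottom of a bounded well of $\{g<0\}$ enclosed by a component of $\{g=0\}$: this is an island, with $|{\cal U}_{\cal K}|<1$ inside and $>1$ just outside. Its boundary necessarily consists of an interval of the critical line together with a loop returning into the right half-plane, the two arcs meeting at the points where $\nabla g=0$; the odd symmetry places the reflected $\{g>0\}$ well immediately to the left. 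This realises the ``island regions surrounding intervals of the critical line.''

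Finally, for the argument variation I would use that on any level curve of $|{\cal U}_{\cal K}|$ the Cauchy--Riemann equations force $\nabla\phi$ parallel to the curve, so $d\phi/d\ell=\pm|f'|$ along arc length $\ell$ and $\phi$ is strictly monotonic wherever $f'={\cal U}_{\cal K}'/{\cal U}_{\cal K}\neq0$. On the critical-line side, $\partial\phi/\partial t=\partial g/\partial\sigma$ keeps one sign on the maximal interval delimited by the two corner points (where $\nabla g=0$, equivalently $f'=0$), so $\phi$ is monotonic there; on the loop side the same holds provided the loop carries no further zero of $f'$. The argument principle fixes the total increment of $\phi$ around the boundary at $2\pi$ (one enclosed zero of ${\cal K}(1,1;s)$, no enclosed pole), consistent with monotonicity broken only at the two corner saddles. \emph{The main obstacle is precisely this last point}: excluding spurious critical points of ${\cal U}_{\cal K}$ (saddles of $g$) on the island boundary away from the two corners. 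Controlling the global zero set of ${\cal U}_{\cal K}'$ rigorously is hard, and within the standards of this paper I would argue it from the eventual monotonicity of $\phi$ along the line together with the numerical contour evidence, rather than by a complete analytic proof.
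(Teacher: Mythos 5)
Your argument is correct in substance and rests on the same two pillars as the paper's proof --- the growth $|{\cal U}_{\cal K}(1,1;s)|\to\infty$ for $\sigma$ large and positive (forcing the sub-unit-modulus regions on the right to be bounded in $\sigma$, with the left half-plane handled by the reflection symmetry), and the Cauchy--Riemann equations for the monotone argument variation along the boundary --- but your route through them differs in useful ways. For the asymptotics, the paper works from the Dirichlet-series form (\ref{isleq-1})--(\ref{isleq-3}), expanding $\Gamma(s-1/2)/\Gamma(s)$ and truncating $\zeta(s)L_{-4}(s)$ and $\zeta(2s)$ to reach ${\cal U}_{\cal K}\sim 2s(1+2^{-s}-\sqrt{\pi/s})/(1-(s-1)2^{1-s})$; you instead derive the exact identity ${\cal K}(1,1;s)=\tfrac{s}{2}{\cal K}(0,0;s)+\tfrac18{\cal L}(s)$ (which is indeed equivalent to (\ref{n26}) via (\ref{mac2s})) and locate the same cancellation of the $O(s)$ terms in the denominator only. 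Both computations give ${\cal U}_{\cal K}\sim 2s$, and both are really asymptotics in $\sigma$ at large $|s|$, so neither controls the behaviour as $t\to\infty$ at fixed $\sigma$; the paper concedes this openly (``\emph{if} they are also limited in their $t$ range, they form islands''), and your maximum-principle framing of $\log|{\cal U}_{\cal K}|$ as an odd harmonic function does not close that gap either, though it does cleanly establish that any \emph{bounded} component of $\{|{\cal U}_{\cal K}|<1\}$ in $\sigma>1/2$ must contain a zero of ${\cal K}(1,1;s)$, a structural point the paper leaves implicit. On the monotonicity, your treatment is more honest than the paper's: the paper asserts the argument increase directly from Cauchy--Riemann, whereas you correctly isolate the real hypothesis --- no critical points of ${\cal U}_{\cal K}'/{\cal U}_{\cal K}$ on the boundary arcs away from the two corner saddles --- and acknowledge that within the standards of this paper it rests on the argument-principle count ($2\pi$ per enclosed zero) plus numerical evidence rather than proof. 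In short: same skeleton, cleaner algebra, and an explicit accounting of the two gaps ($t$-boundedness and absence of spurious saddles) that the paper's proof shares but does not name.
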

 \begin{proof}
 We have  from equation (\ref{n26}) the expansion 
 \begin{equation}
 {\cal U}_{\cal K}(1,1;s)=\frac{s {\cal C}(0,1;s)-4\sqrt{\pi}\Gamma (s-1/2) \zeta (2 s-1)/\Gamma (s)}{(1-s) {\cal C}(0,1;s)+4 (s-1/2) \zeta (2 s)}.
 \label{isleq-1}
 \end{equation}
 Using the asymptotic expansion for $\Gamma (s+1/2)/\Gamma (s)$ when $|t|>>1$,
  \begin{equation}
 {\cal U}_{\cal K}(1,1;s)\simeq  \frac{s \zeta (s) L_{-4}(s)-\sqrt{\pi s}(1-1/(8 s)+\ldots)  \zeta (2 s-1)}{(1-s) \zeta(s) L_{-4}(s) + (s-1/2) \zeta (2 s)}.
 \label{isleq-2}
 \end{equation}
 We assume $t$ and $\sigma -1/2$ are sufficiently large so the second term in the numerator is negligible compared with the first, and that in the denominator the series for the product $\zeta (s) L_{-4} (s)$
 and for $\zeta (2 s)$ may be used. We then obtain the following approximation from (\ref{isleq-2}):
 \begin{equation}
 {\cal U}_{\cal K}(1,1;s)\simeq  \frac{2 s (1+1/2^s-\sqrt{\pi/s})}{1- (s-1) 2^{1-s}}.
 \label{isleq-3}
 \end{equation}
 This shows that $|{\cal U}_{\cal K}(1,1;s)|\rightarrow \infty$ as $|s|\rightarrow \infty$ in $\sigma >>1/2$. Hence, regions with $|{\cal U}_{\cal K}(1,1;s)|>1$ in $\sigma>1/2$ must be bounded in their $\sigma$
 range. Since $|{\cal U}_{\cal K}(1,1;\overline{1-s})|=|1/{\cal U}_{\cal K}(1,1;s)|$, regions with $|{\cal U}_{\cal K}(1,1;s)|<1$ in $\sigma<1/2$ must also be bounded in their $\sigma$
 range. If they are also limited in their $t$ range, they form islands symmetric about the critical line, with boundaries given by $|{\cal U}_{\cal K}(1,1;s)|=1$. Note that, if $t$ is sufficiently large
 for $O(1/\sqrt{t})$ to be negligible, no zero $s_0$ of $\zeta(2 s-1)$ can lie on a boundary line, since we have  ${\cal U}_{\cal K}(1,1;s_0)=s_0/(1-s_0)$.
 
 On the island boundary in $\sigma>1/2$, $|{\cal U}_{\cal K}(1,1;s)|$ goes from smaller  than unity in the island to larger than unity outside it, and so by the Cauchy-Riemann equations its argument must increase
 around the boundary in the direction of increasing $t$.  Since the argument of this function is even under $s\rightarrow \overline{1-s}$, it must also increase around the left boundary as $t$ increases. On the critical line within the island region, the argument increases as $t$ decreases.
 \end{proof}
 
 A convenient criterion for deciding whether an interval on the critical line is in an extended region or an island region is that
\begin{equation}
\frac{d}{d t} {\cal U}_{\cal K} (1,1;\frac{1}{2}+i t;1)<0 ~{\rm in~ an~ island~ region}
\label{srep15}
\end{equation}
and is positive in an extended region.

\begin{figure}[h]
\includegraphics[width=2.5in]{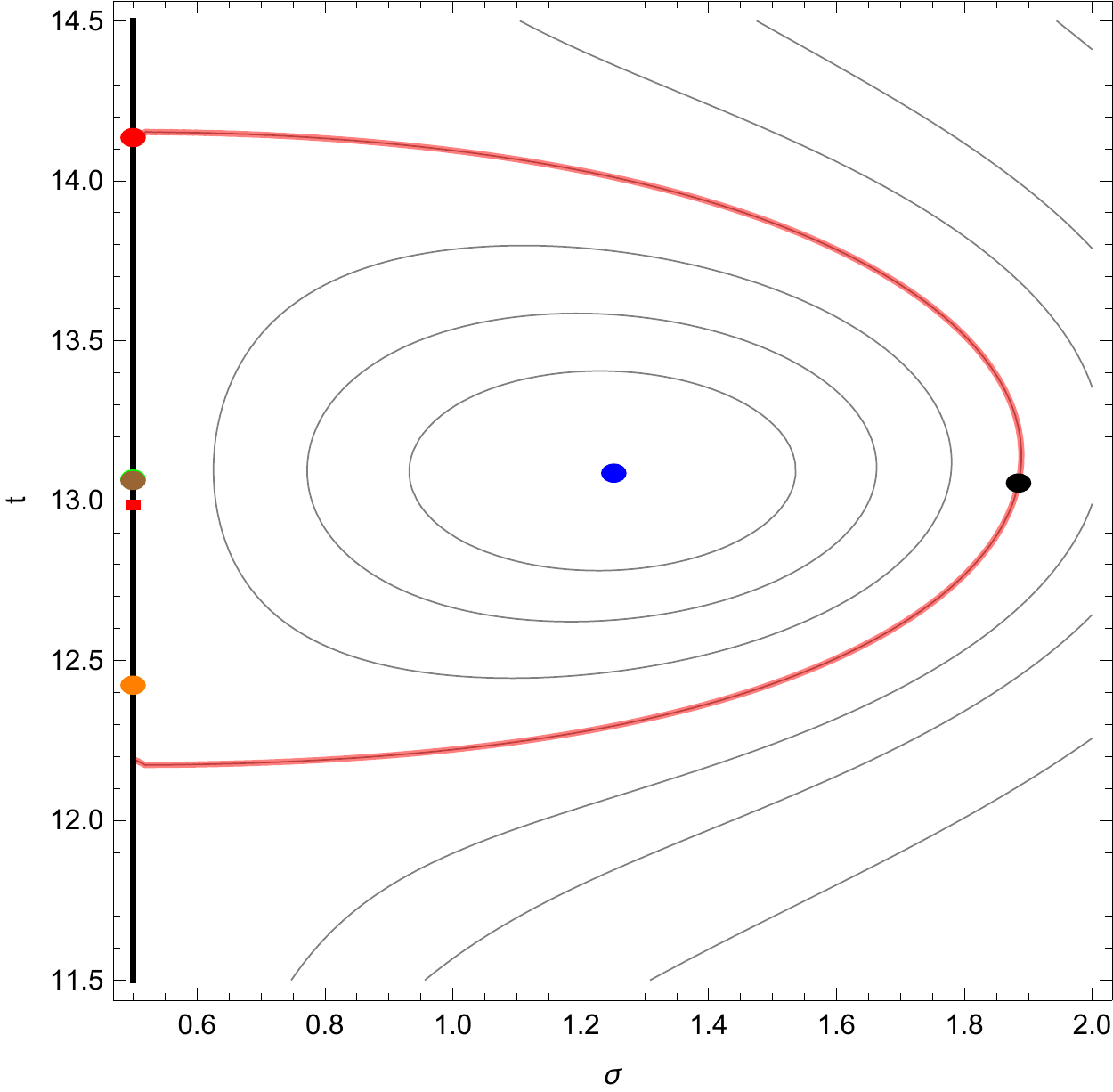}~~\includegraphics[width=2.5in]{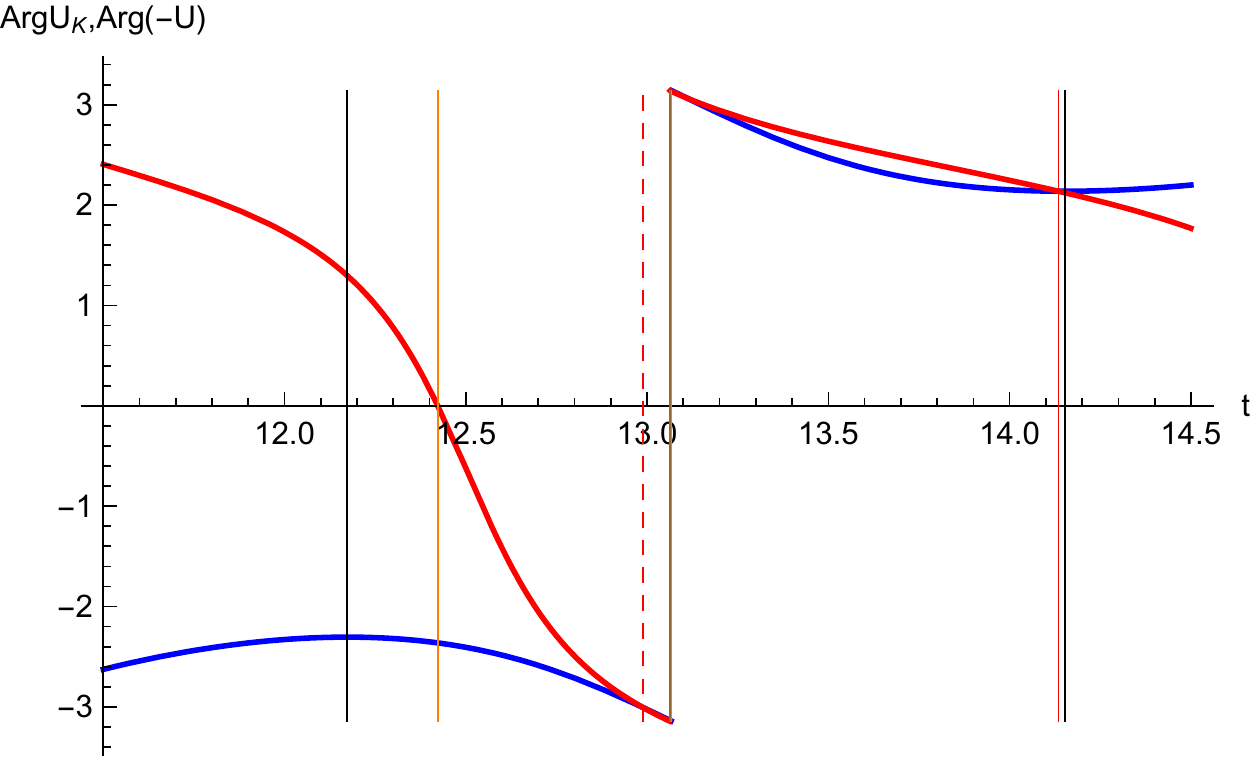}
\caption{(Left) Contours of $|{\cal  U}_{\cal K}(1,1; \sigma+i t)|$ in the plane $(\sigma, t)$, with the red contour corresponding to modulus unity.  The coloured dots correspond to
zeros of $\zeta (s)$ (red), $L_{-4}(s)$ (red rectangle), ${\cal  U}_{\cal K}(0,0; \sigma+i t)$ (black),  ${\cal  U}_{\cal K}(1,1; \sigma+i t)$ (blue), $ {\cal K}_\lambda(0,0; \sigma+i t)$  (green), ${\cal T}_+(1,\sigma+i t)$ (orange) and ${\cal T}_-(1,\sigma+i t)$ (brown). 
(Right)  $\arg{\cal  U}_{\cal K}(1,1; 1/2+i t,1)$ (blue curve) and $\arg (-{\cal U}(1/2+i t))$ (red curve) as a function of $t$, with coloured lines indicating values of $t$ as at left
(the red dashed line corresponding to $L_{-4}$), and the black lines representing the start and end $t$ values of island 1. }
\label{figisland1}
\end{figure}

We give two examples of island regions in Figs. \ref{figisland1}, \ref{figisland2}. Fig. \ref{figisland1} shows the first island region, which extends from $t=12.1731$ to $t=14.1520$.
This region has a simple structure, typical of those observed for higher values of $t$. Values of key points indicated in Fig. \ref{figisland1} are given in Table \ref{tabisle1}.
\begin{table}
\begin{tabular}{|c|c|c|}
\hline
Function for Zero& Zero Value & $\arg({\cal U})$\\ \hline
${\cal K}(0,0)$ & 1.8847+13.0547 i & na \\
${\cal K}(1,1)$ & 1.25182+13.0856 i& na \\ \hline
${\cal T}_+$ & 0.5+12.4226 i & $\pi$\\
${\cal T}_-$ & 0.5+13.0625 i & 0 \\
${\cal K}_\lambda (0,0)$ & 0.5+13.0672 i & -0.0077\\
${\cal L}$ & 0.5+13.1108 i & -0.07624\\ 
$\zeta$ & 0.5+14.1347 i & -1.00321\\ 
$L_{-4}$ & 0.5+13.1108 i & 0.13634\\
\hline
\end{tabular}
\caption{Key points for the first island region.}
\label{tabisle1}
\end{table}

The part of the island region in $\sigma\ge 1/2$ contains a single zero of ${\cal K}(1,1;s)$, and so the argument of ${\cal U}_{\cal K}(1,1;s)$ increases monotonically through a range of $2\pi$ as the boundary $\Gamma_+$ formed by the red contour in Fig. \ref{figisland1} (left) completed by the interval of the critical line is traversed in the anti-clockwise sense. At $t=12.1731$ and  $t=14.1520$ ${\cal U}(s)$ has argument values -1.84514 and  -1.01777 respectively. Each of the four functions ${\cal T}_+$, ${\cal T}_-$,
${\cal L}$ and ${\cal K}_\lambda(0,0;s)$ have one zero in the island region, with the last three lying in close proximity. The single zero of 
$ {\cal K}_\lambda(0,0;s)$ occurs at a $t$ value close to that of the  off-axis zero of  ${\cal K}(0,0;s)$, as might be expected (given that the latter function is even under $s\rightarrow 1-s$). The single zero of ${\cal C}(0,1;s)$ is of $\zeta (s)$, and occurs close to the upper end of the island.

\begin{figure}[h]
\includegraphics[width=3.5in]{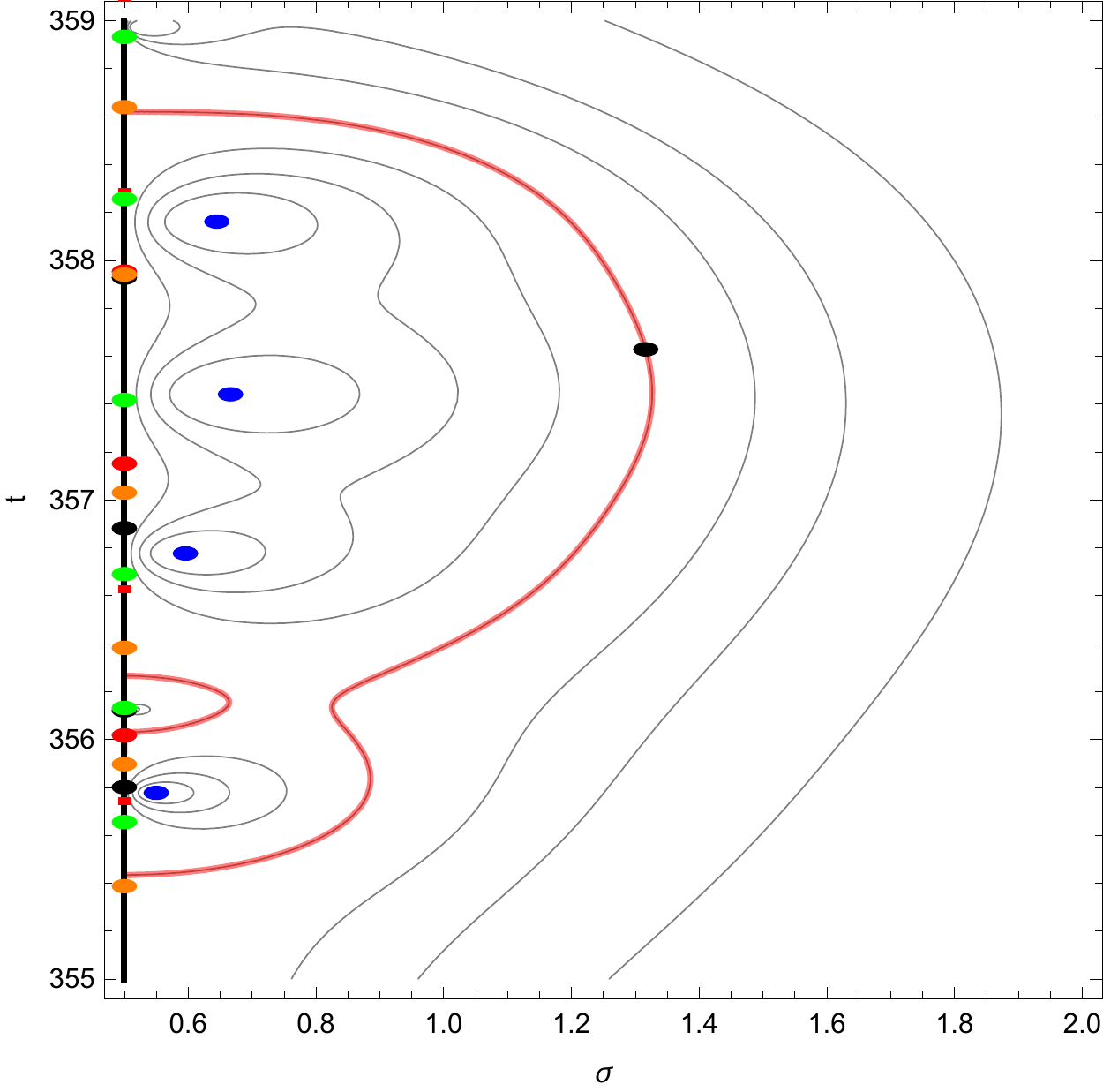}\\
\includegraphics[width=3.5in]{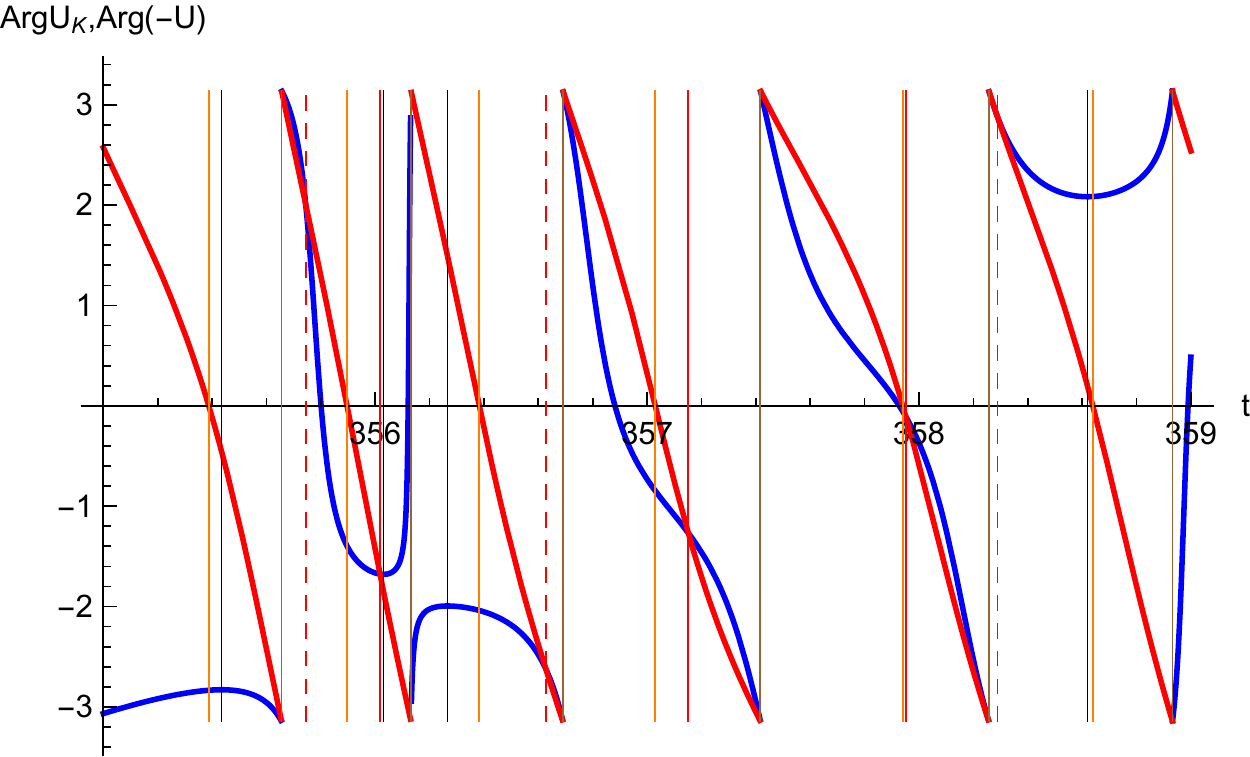}
\caption{(Above) Contours of $|{\cal  U}_{\cal K}(1,1; \sigma+i t)|$ in the plane $(\sigma, t)$, with the red contour corresponding to modulus unity.  The coloured dots correspond to
zeros of $\zeta (s)$ (red), ${\cal  U}_{\cal K}(0,0; \sigma+i t)$ (black),  ${\cal  U}_{\cal K}(1,1; \sigma+i t)$ (blue), $ {\cal K}_\lambda(0,0; \sigma+i t)$  (green), ${\cal T}_+(\sigma+i t)$ (orange) and ${\cal T}_-(\sigma+i t)$ (brown); red rectangles denote zeros of $L_{-4}(s)$. 
(Below)  $\arg{\cal  U}_{\cal K}(1,1; 1/2+i t)$ (blue curve) and $\arg (-{\cal U}(1/2+i t))$ (red curve) as a function of $t$, with coloured lines indicating values of $t$ as above, and the black lines representing the start and end $t$ values of islands 118-119. Red solid lines for chosen values of $t$ indicate zeros of $\zeta(s)$, while red dashed lines indicate zeros of 
$L_{-4}(s)$.}
\label{figisland2}
\end{figure}

For the more complicated case of Fig. \ref{figisland2}, there are four off-axis zeros of ${\cal K}(1,1;s)$ and one pole in the island region for $\sigma>1/2$ (see
Table \ref{tabisle2}).  As a result of the pole, the critical line in the island region is split into three intervals: $t\in (355.4347, 356.0307)$,
$t\in (356.0307,356.2656)$ and $t\in (356.2656, 358.6201)$. In the first and third, $\arg {\cal U}_{\cal K}(1,1;s)$ decreases as $t$ increases, while in the second it increases.
We will call the region including the second interval of the critical line the enclave region; points corresponding to it in Table \ref{tabisle2} are indicated by a superscripted asterisk.
It contains one zero of each of ${\cal K}(0,0;s)$, ${\cal T}_-(s)$, ${\cal K}_\lambda(0,0;s)$, and ${\cal L}(s)$, but none of $S_0(s)$.
The change of argument of $\arg {\cal U}_{\cal K}(1,1;s)$ round the outer boundary of the island region, and along the whole of the critical line from $t=358.6201$ to $t=355.4347$
is then $8\pi-2\pi=6\pi$. There are three zeros for  each of $\zeta (s)$ and $L_{-4}(s)$ lying on the critical line in the whole $t$ interval described. 

In Table \ref{tabisle2}, zeros of ${\cal T}_-(s)$ have been numbered in brackets, and these bracketed numbers have been used to indicate where zeros of ${\cal K}(0,0;s,1)$, ${\cal K}_\lambda(0,0;s)$, $\zeta (s)$ and $L_{-4}(s)$ lie by specifying which is the zero of ${\cal T}_-(s)$ to the left and closest to the zero of the function in question. The tight correlation between zeros of ${\cal T}_-(s)$ and those of ${\cal K}_\lambda(0,0;s)$ will be commented on below. Two zeros of ${\cal K}(0,0;s)$ lie between the second and third zeros of ${\cal T}_-(s)$, adjacent to the enclave region. One zero for each of $\zeta (s)$ and $L_{-4}(s)$ lies to the right of the first zero of ${\cal T}_-(s)$; otherwise, there is only one in subsequent intervals between  zeros of ${\cal T}_-(s)$.
\begin{table}
\begin{tabular}{|c|c|c|c|c|c|}
\hline
Function for Zero& Zero Value & $\arg({\cal U})$ &Function for Zero& Zero Value & $\arg({\cal U})$\\ \hline
${\cal K}(0,0)$ & 1.3164+357.6282 i & na &${\cal K}(1,1)$ & 0.64487+358.1618 i& na \\ 
  & 0.5+355.8009 i(1) &  -1.85894 & &0.6662+ 357.4409 i & \\
  & 0.5+356.1213 i(1)$^*$ &  0.12434 &  &0.5957+ 356.7767i &\\
  & 0.5+356.8817 i (3)& -1.60714 &   &0.5500+ 355.7773 i & \\
  & 0.5+357.9275 i (4) & -3.02227 &    & 0.4951+ 356.1250 i$^*$     & \\
  \hline
${\cal T}_+$ & 0.5+355.8967 i & $\pi$ &${\cal T}_-$ & 0.5+355.6555 i (1)& 0 \\
 & 0.5+356.3824 i &  &  &0.5+ 356.1314 i(2)$^*$ & \\
 & 0.5+357.0303 i &   & &0.5+356.6901 i (3)& \\
 & 0.5+357.9406  i &    & &0.5+357.4165 i(4) &  \\
 & 0.5+358.6393  i &  &  &0.5+ 358.2564 i(5)  & \\
\hline
${\cal K}_\lambda(0,0)$ & 0.5+355.6551 i & 0.0044295 & ${\cal L}$ & 0.5+355.6557 i & -0.00281171\\ 
 &0.5+356.1316 i (2)$^*$ &  -0.00304713 & & 0.5+356.1316 i$^*$ & -0.00280795 \\
 &0.5+ 356.6902  i (3) &-0.000806774 & & 0.5+356.6905 i & -0.00280355\\
 &0.5+357.4167 i (4) & -0.0015263 & & 0.5+357.41698 i & -0.00279785 \\
  &0.5+ 358.2565 i (5) & -0.000235771 & & 0.5+358.25676 i & -0.00279129\\
 \hline
$\zeta$ & 0.5+356.0176 i  (1)& 1.46531  &$L_{-4}$ & 0.5+355.7444 i  (1)&-1.13605  \\
 &0.5+357.1513 i (3) & 1.87538   & & 0.5+356.6277 i  (2) & 0.519007 \\
  & 0.5+357.9527 i (4) & 3.02868  &  & 0.5+358.2883 i (5) & -0.259667 \\
  \hline
\end{tabular}
\caption{Key points for the  island region of Fig. \ref{figisland2}.}
\label{tabisle2}
\end{table}


The most interesting feature evident from Figs. \ref{figisland1} and \ref{figisland2} is the confluence of the argument curves of the functions ${\cal U}_{\cal K}(1,1;s)$ and
$-{\cal U}(s)$ around points where ${\cal T}_-(s)=0$. This confluence can be understood from the discussion around equations (\ref{srep9}) and (\ref{srep10}), given that
when equation (\ref{srep9}) holds,
\begin{equation}
\arg {\cal U}(s)=\arg \left[\frac{1/2+i t}{-1/2+ i t} \right]\sim-\frac{1}{t}~\mod (2 \pi),
\label{srep16}
\end{equation}
while ${\cal T}_-(s)=0$ implies $\arg {\cal U}(s)=0$ modulo $2\pi$.

Table \ref{tabisle3} gives some parameters of the island regions for $t$ ranging up to 500. The number of island regions in each range of 100 in $t$ increases with $t$, and the fraction of the range of $t$ occupied by island regions tends also to increase with $t$, although  the range of $t$ from 200-300 has fewer  islands than one might expect, and the island fraction is higher than for other ranges shown.  The mean length of islands tends to decrease slowly with increasing $t$, but the trend is not marked, particularly bearing in mind the large standard deviations in the distributions of length.
\begin{table}
\begin{tabular}{|c|c|c|c|c|}\hline
Range&Number&Fraction&Mean Length& Standard Deviation Length\\ \hline
0-100& 22& 31\% & 1.42&0.98\\
100-200& 34&38\% &1.19&0.77\\
200-300& 37& 52\% & 1.39&1.90\\
300-400 & 45&41\% &0.90&0.71\\
400-500&49&46\% &0.93&0.91\\ \hline
0-500 & 187 & 41\% & 1.11 & 1.13 \\ \hline 
\end{tabular}
\caption{Statistics of island regions for $t$ ranging up to 500.}
\label{tabisle3}
\end{table}
\begin{figure}[h]
\includegraphics[width=3.5in]{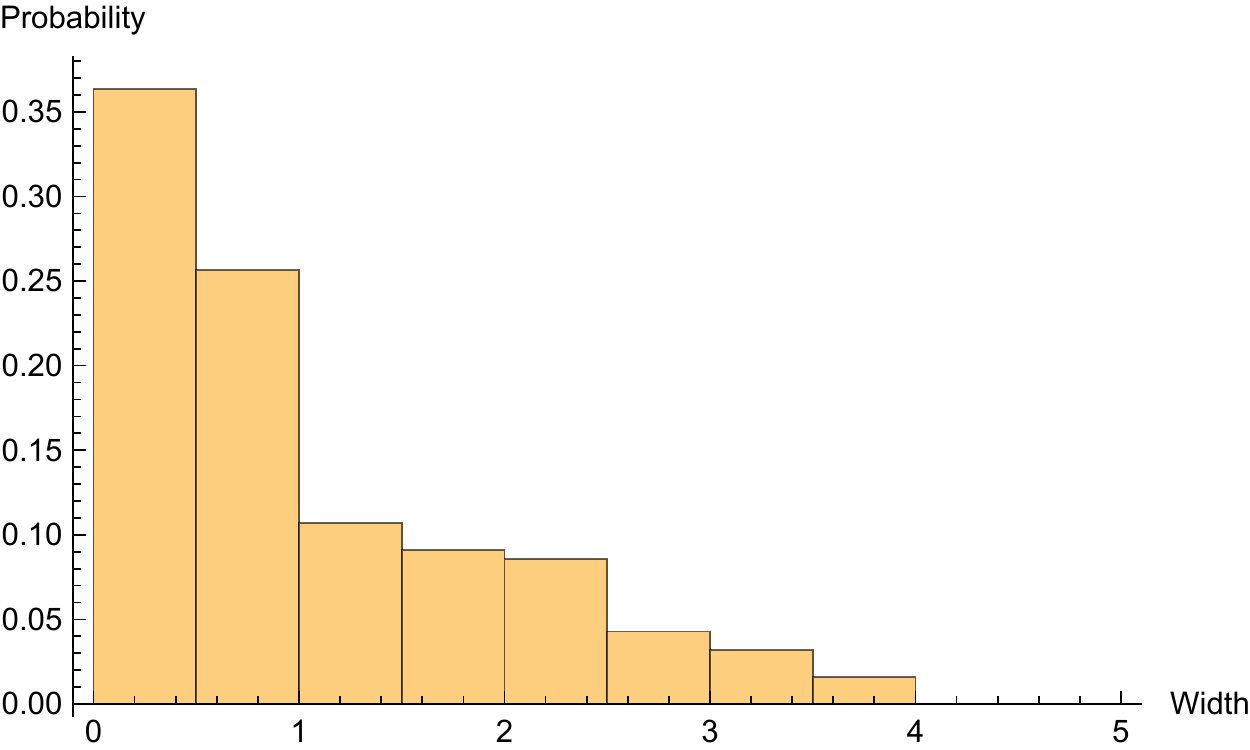}\\
\includegraphics[width=3.5in]{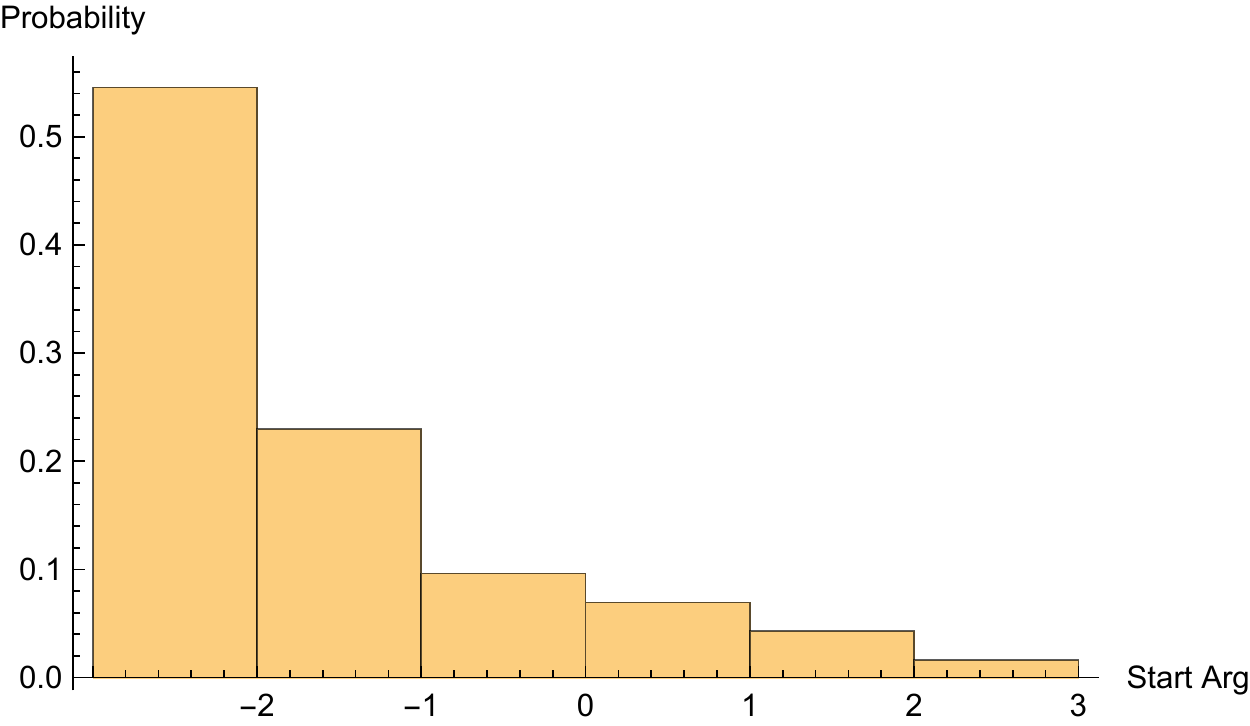}~~\includegraphics[width=3.5in]{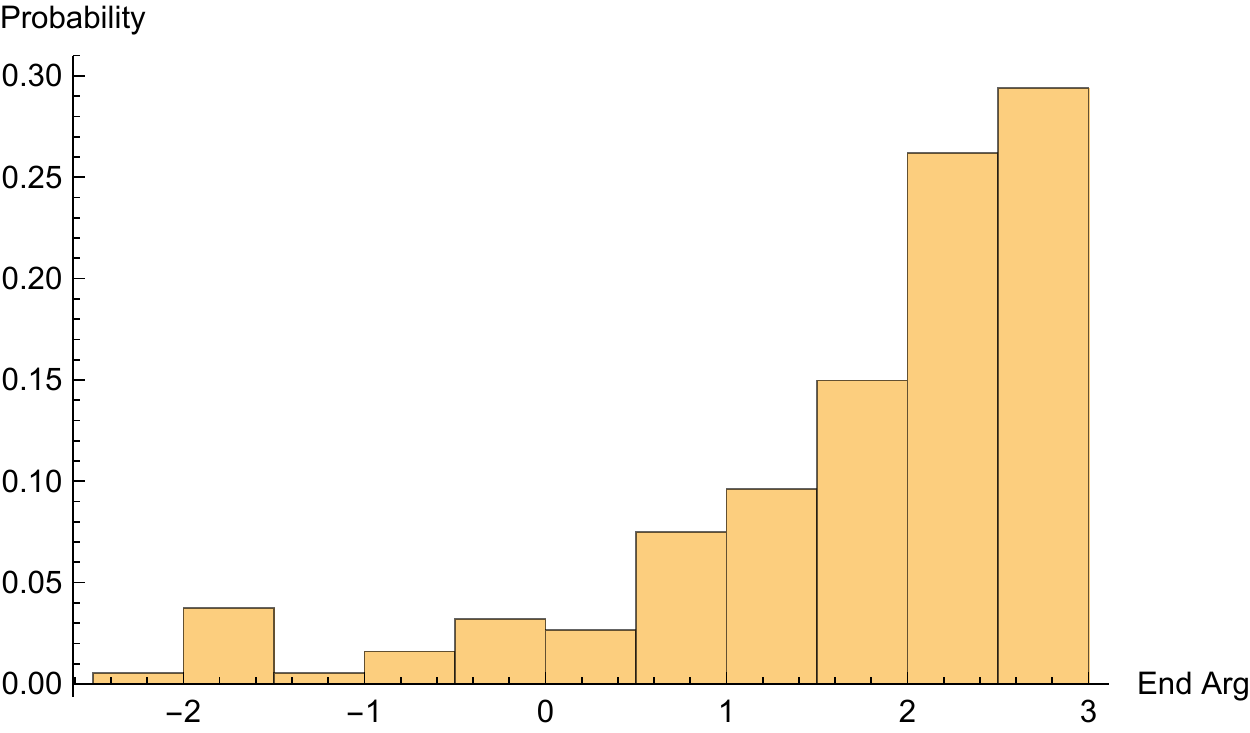}

\caption{(Above) Histogram of the length of island states for $t$ in the range 0-500.
(Below) Histograms of the quantity $\arg {\cal U}_{\cal K}$ at the beginning and end of island states  for $t$ in the range 0-500.}
\label{figisle3}
\end{figure}
The histograms of Fig. \ref{figisle3} illustrate the variations of the lengths of the island regions, and of the values of $\arg {\cal U}_{\cal K}(1,1;s)$ (in the range between
$-\pi$ and $\pi$) at the beginning and end of each island region for $t$ varying from 0-500. Around 60\% of islands have lengths below unity, over 50\% of starting arguments lie between -2 and -3, while over 50\% of end arguments lie between 2 and 3. The change  in argument values along the critical line and around the outer boundary of the island in $\sigma>1/2$ is $2\pi$ times the difference between the number of zeros and poles of ${\cal K}(1,1;s)$. In the case of Fig. \ref{figisland1} for example, the change of argument along the critical line is 4.4305, giving a change of argument along the outer boundary of 1.8401.

The filling of Table \ref{tabisle3} was somewhat labour intensive. A more automated procedure is possible if one wants to determine the fraction of zeros of a function in island regions; this can be done simply by applying equation (\ref{srep15}) to a table of zeros. Using this procedure, one can easily classify the number of zeros of $\zeta(s)$ lying on the critical line in island regions. For the first 10,000 zeros, this gives 7467 island zeros, with the fraction of zeros lying in islands varying little around 75\% for each set of 1000 zeros ranging from 1000 to 10000. For $L_{-4}(s)$, the first ten thousand zeros have 6925 lying within islands, and again the fraction lying within islands varies little from 70\% for sets of 1000 zeros ranging from 1000 to 10000.

\begin{figure}[h]
\includegraphics[width=3.5in]{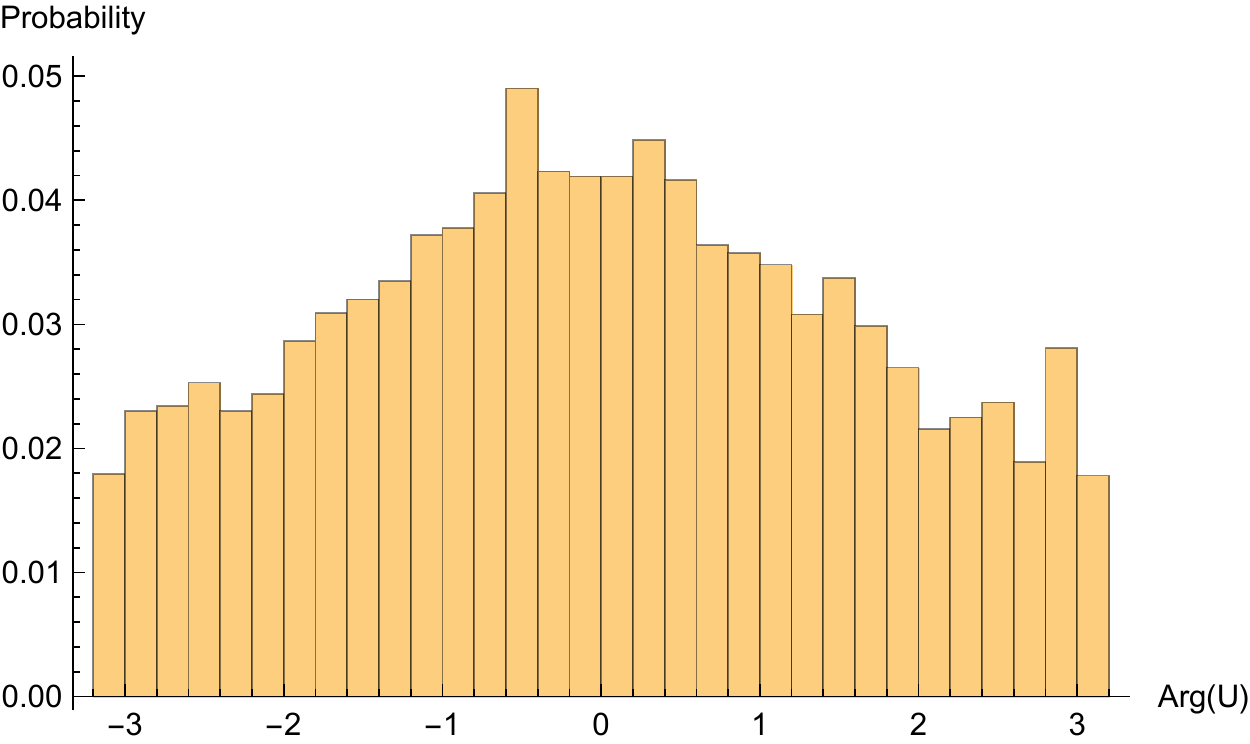}~~\includegraphics[width=3.5in]{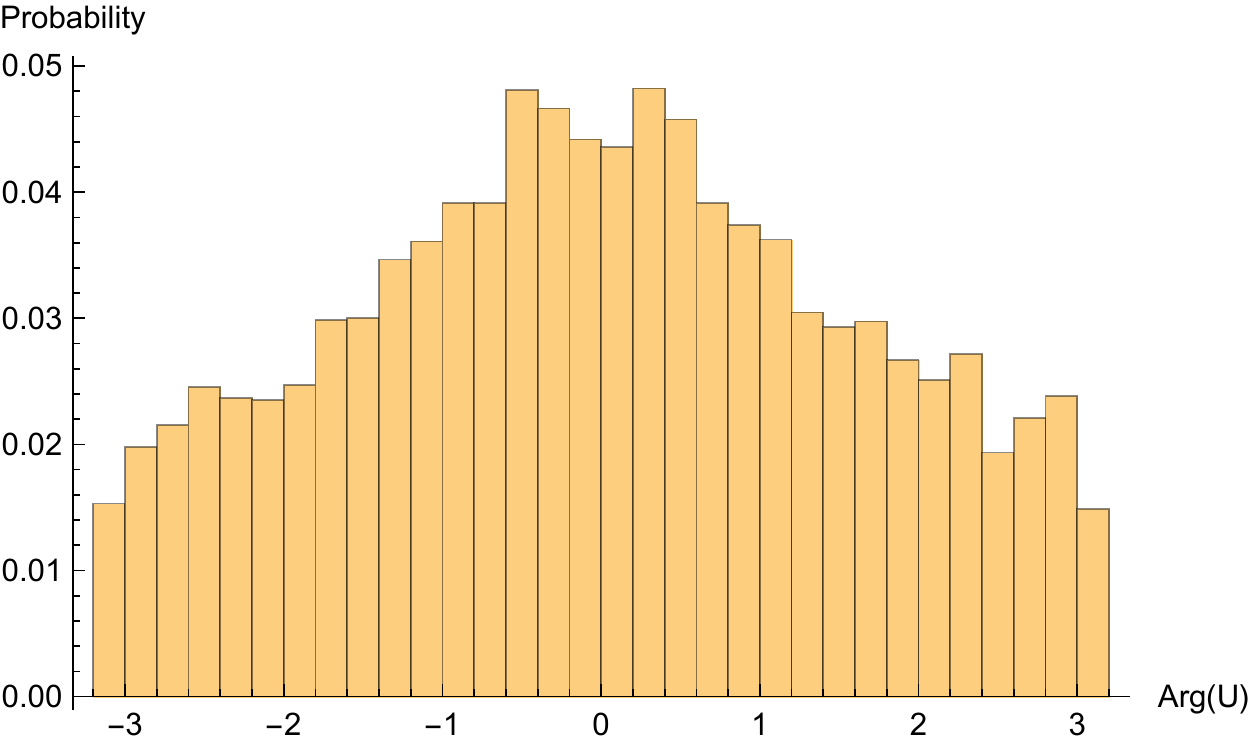}

\caption{ Histograms of the value of the argument of ${\cal U}(s)=\xi_1(2 s-1)/\xi_1(2 s)$ with $s$ chosen from the subset of the first ten thousand zeros of (left) $\zeta(s)$ or (right) $L_{-4}(s)$
which lie in island states.
}
\label{figisle4}
\end{figure}
The histograms of  Figure \ref{figisle4} show the distributions of the values of the argument of ${\cal U}(s)$ at the first ten thousand zeros of $\zeta (s)$ and $L_{-4}(s)$. These distributions are essentially flat over the range $-\pi$ to $\pi$. This is interesting, and perhaps surprising, in that the zeros correspond to values of $s$ for which 
${\cal U}_{\cal K}(1,1;s)=-{\cal U}(s)$; another set of zeros obeying the same equation is of course that of ${\cal L}(s)$, and these we have shown are concentrated in the neighbourhood of $\arg {\cal U}(s)=\pm \pi$.

We can analyse the zeros of $S_0(s)$ with regard to whether they lie in intervals of the critical line in which (\ref{srep15}) holds or does not hold. For each of these two sets, we can break them into clusters where the zeros of $\zeta(s)$ and $L_{-4}(s)$ both lie in unbroken consecutive sets. We have studied the first 10,000 zeros of $L_{-4}(s)$, which lie in an interval of $t$ where there are 8171 zeros of $\zeta(s)$. Of the 18171 zeros of $S_0(s)$,13014 or around 71.6\% lie in 2809 sequences where (\ref{srep15}) holds,
so that the average number of zeros per sequence is 4.633. Of these, the number of $\zeta$ zeros is on-average 2.167, and of $L_{-4}$ zeros  the average is 2.465-  these averages bearing a ratio of 0.8791:1. The contrary set of intervals have shorter sequences of zeros (average 1.8365), with the average numbers for $\zeta(s)$ being 0.7415 and for $L_{-4}(s)$
being 1.0947 (in the ratio 0.6773:1). Thus, the in-island intervals have a more balanced set of zeros of the two functions, with the intervening intervals having a greater fraction of zeros of $L_{-4}(s)$.

A useful result is obtained from equation (\ref{srep6}) if we solve for $S_0(1,s)$:
\begin{equation}
S_0(s)=\frac{\Gamma (s) S_0(s)}{8\pi^s}=\frac{{\cal V}_{\cal K}(1,1;s) {\cal T}_-(s)+{\cal T}_+(s)}{1-{\cal V}_{\cal K}(1,1;s) /(2 s-1)}.
\label{srep17}
\end{equation}
The representation (\ref{srep17}) may be forced into a form suitable for expanding about zeros of  $\tilde{S}_0(1,s)$:
\begin{equation}
S_0(s)=\frac{2 (2 s-1) {\cal T}_-(1,s)^2[ {\cal V}_{\cal K}(1,1;s;1)+{\cal V}(1,s)]} {{\cal L}(s)-2 {\cal T}_-(1,s)[ {\cal V}_{\cal K}(1,1;s)+{\cal V}(s)]}.
\label{srep18}
\end{equation}
When ${\cal L}(s)=0$, (\ref{srep18}) loses its dependence on $[ {\cal V}_{\cal K}(1,1;s;1)+{\cal V}(s)]$; otherwise, it is zero when this factor is zero.
When ${\cal T}_-(s)=0$, (\ref{srep17}) should be used;  the numerator then reduces to ${\cal T}_+(s)$, which is known to be non-zero if ${\cal T}_-$ is zero.
Equation (\ref{srep18}) can also be expressed in terms of $[ {\cal U}_{\cal K}(1,1;s)+{\cal U}(s)]$, since
\begin{equation}
[ {\cal V}_{\cal K}(1,1;s;1)+{\cal V}(1,s)]=\frac{[ {\cal U}_{\cal K}(1,1;s)+{\cal U}(1,s)]}{({\cal U}_{\cal K}(1,1;s)+1)(1-{\cal U}(1,s))}.
\label{srep19}
\end{equation}
Thus, $\tilde{S}_0(1,s)\rightarrow 0$ as $[ {\cal V}_{\cal K}(1,1;s)+{\cal V}(s)]\rightarrow 0$ or, equivalently, $[ {\cal U}_{\cal K}(1,1;s)+{\cal U}(s)]\rightarrow 0$.

In addition to (\ref{srep18}) and (\ref{srep18}), we have for the derivatives with respect to $s$:
\begin{equation}
[ {\cal V}'_{\cal K}(1,1;s)+{\cal V}'(s)]=\frac{2 {\cal U}'_{\cal K}(1,1;s)}{(1+{\cal U}_{\cal K}(1,1;s))^2}+\frac{2 {\cal U}'(s)}{(1-{\cal U}(s))^2}.
\label{srep20}
\end{equation}
Hence, at a zero $s_0$ of $\tilde{S}_0(1,s)$:
\begin{equation}
[ {\cal V}'_{\cal K}(1,1;s_0)+{\cal V}'(s_0)]=\frac{2 [{\cal U}'_{\cal K}(1,1;s_0)+{\cal U}'(s_0)]}{(1-{\cal U}'(s_0))^2}.
\label{srep21}
\end{equation}
For $s$ on the critical line,
\begin{equation}
\frac{d {\cal U}_{\cal K}(1,1;1/2+ i t)}{d s}={\cal U}_{\cal K}(1,1;1/2+ i t)\frac{d \arg {\cal U}_{\cal K}(1,1;1/2+ i t)}{d t}
\label{srep22}
\end{equation}
and
\begin{equation}
\frac{d {\cal U}(1/2+ i t)}{d s}={\cal U}(1/2+ i t)\frac{d \arg {\cal U}(1/2+ i t)}{d t}.
\label{srep23}
\end{equation}
Hence, for  $s_0$ on the critical line,
\begin{eqnarray}
\frac{d {\cal U}_{\cal K}(1,1;1/2+ i t_0)}{d s}+\frac{d {\cal U}(1/2+ i t_0)}{d s}={\cal U}(1/2+ i t_0)  && \nonumber \\
\left[\frac{d \arg {\cal U}(1,1/2+ i t_0)}{d t}-\frac{d \arg {\cal U}_{\cal K}(1,1;1/2+ i t_0)}{d t}\right]  & &.
\label{srep24}
\end{eqnarray}

{\bf Remark:} We know that all zeros of $\tilde{S}_0(s)$ in extended regions lie on the critical line. From equation (\ref{srep24}), we then also know that all these zeros are simple, since  the argument derivatives of ${\cal U}_{\cal K}$ and ${\cal U}$ there have opposite signs.

\begin{figure}[h]
\includegraphics[width=3.5in]{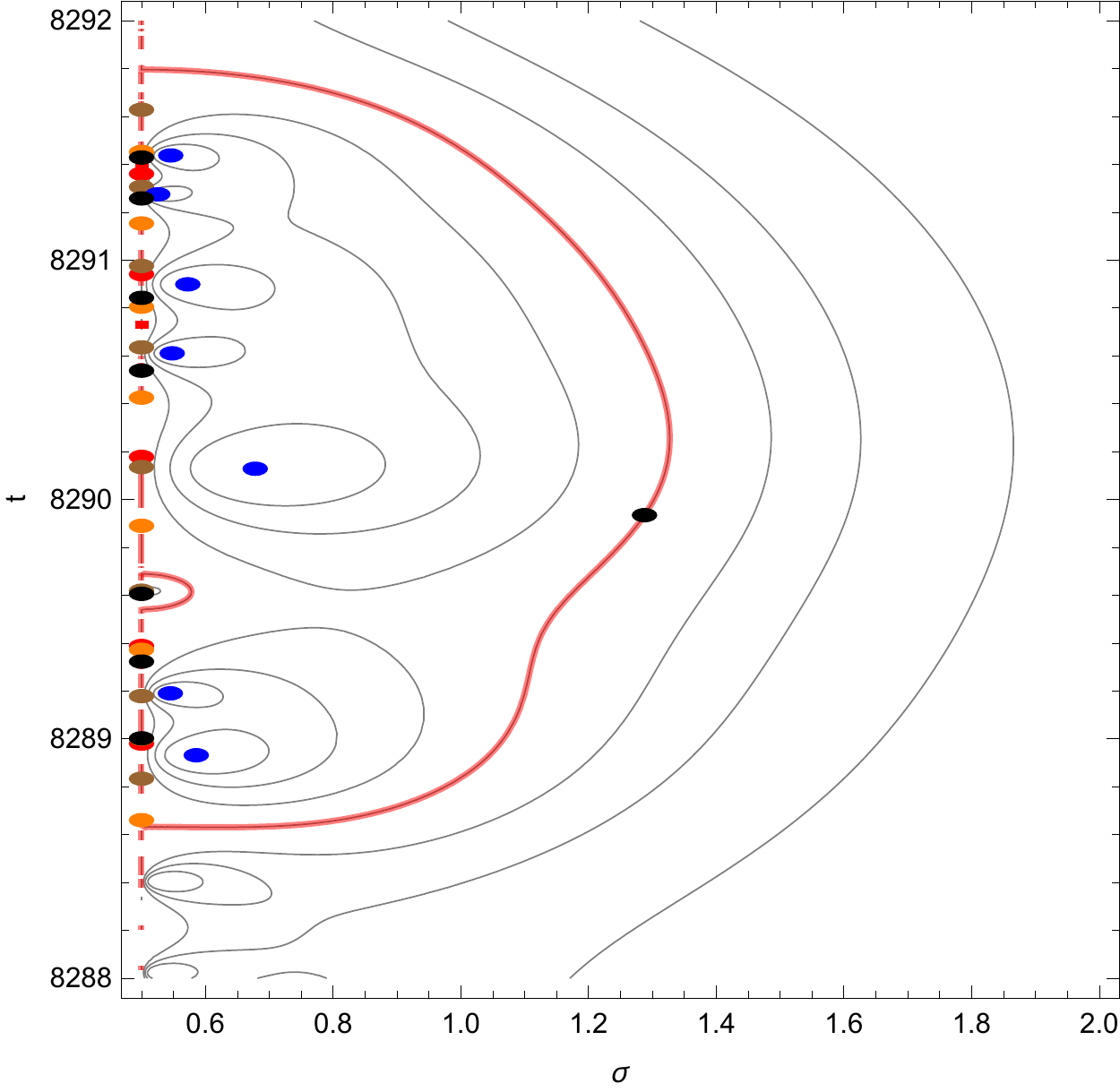}\\
\includegraphics[width=3.5in]{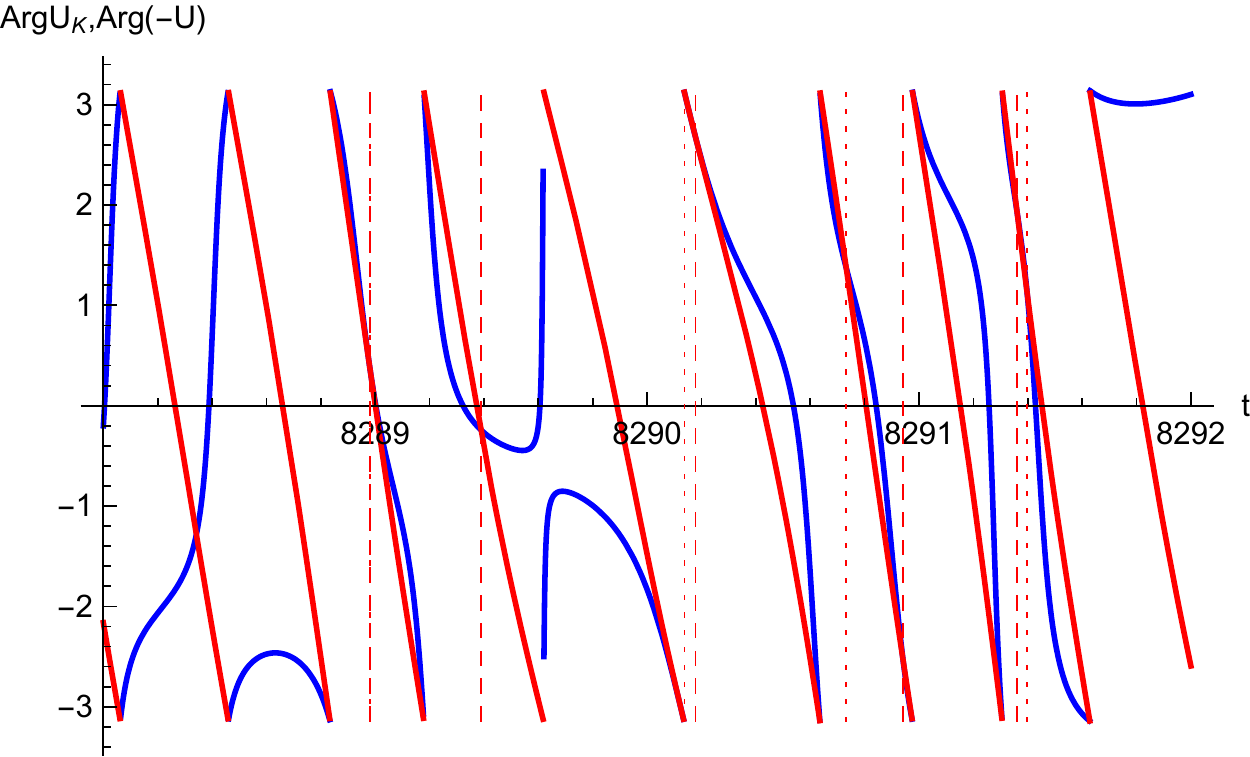}
\caption{(Above) Contours of $|{\cal  U}_{\cal K}(1,1; \sigma+i t)|$ in the plane $(\sigma, t)$, with the red contour corresponding to modulus unity.  The coloured dots correspond to
zeros of $\zeta (s)$ (red), ${\cal  U}_{\cal K}(0,0; \sigma+i t)$ (black),  ${\cal  U}_{\cal K}(1,1; \sigma+i t)$ (blue), $ {\cal K}_\lambda(0,0; \sigma+i t)$ (green), ${\cal T}_+(\sigma+i t)$ (orange) and ${\cal T}_-(\sigma+i t)$ (brown); red rectangles denote zeros of $L_{-4}(s)$. 
(Below)  $\arg{\cal  U}_{\cal K}(1,1; 1/2+i t)$ (blue curve) and $\arg (-{\cal U}(1/2+i t))$ (red curve) as a function of $t$. The island region starts  at $t=8288.63233$ and ends at $t=8291.79597$.  Red dashed lines for chosen values of $t$ indicate zeros of $\zeta(s)$, while red dotted lines indicate zeros of 
$L_{-4}(s)$.}
\label{figisland5}
\end{figure}

Returning to the equations (\ref{mthm2}) and (\ref{srep10}), we have that zeros of $S_0(s)$ and ${\cal L}(s)$ lie on lines where $|{\cal U}_{\cal K}(1,1;s)|=|{\cal U}(s)|$.
For island regions, the contours $|{\cal U}_{\cal K}(1,1;s)/{\cal U}(s)|=1$ lie  on separate lines surrounding each of the zeros of ${\cal U}_{\cal K}(1,1;s)$ in $\sigma>1/2$ and poles in $\sigma<1/2$. We will call the regions within islands where  $|{\cal U}_{\cal K}(1,1;s)/{\cal U}(s)|>1$ the {\em outer islands}, and the regions where $|{\cal U}_{\cal K}(1,1;s)/{\cal U}(s)|=1$ the {\em inner islands}.

{\bf Remark:} All zeros of $\tilde{S}_0(s)$ not on the critical line must lie on the boundaries between outer and inner islands. 

We define 
\begin{equation}
{\cal F}(s)=\frac{{\cal U}_{\cal K}(1,1;1;s)}{{\cal U}(s)}=\frac{1+{\cal G}(s)}{1-{\cal G}(s)},~~{\cal G}(s)=\frac{{\cal F}(s)-1}{{\cal F}(s)+1}.
\label{srep25}
\end{equation}
Then the boundaries between inner and outer islands are given by $|{\cal F}(s)|=1$.  In terms of argument derivatives on the critical line, then
the respective conditions for the extended, outer island and inner island regions are
\begin{equation}
\frac{\partial \arg {\cal U}_{\cal K}(1,1;1;1/2+ it)}{\partial t}>0,~~\frac{\partial \arg {\cal F}(1/2+ it)}{\partial t}>0,~~\frac{\partial \arg {\cal F}(1;1/2+ it)}{\partial t}<0.
\label{srep26}
\end{equation}

In Tables \ref{tabisle4} and \ref{tabisle5} we show the results of classifying  the first ten thousand zeros of $\zeta (s)$ and $L_{-4}(s)$  using  the conditions (\ref{srep26}).

\begin{table}
\begin{tabular}{|c|c|c|c|c|}\hline
 $t$ Range&Inner island zeros& All zeros &1-$\zeta(ii)/\zeta(all)$\\ \hline
0-1000& 173& 649 & 0.7344\\
1000-2000& 224&868&0.7419\\
2000-3000& 245& 952 & 0.7426\\
3000-4000& 284& 1005 & 0.7174\\
4000-5000 & 290 &1046 &0.7228\\
5000-6000&301 &1078 &0.7208\\ 
6000-7000 & 298 & 1105 &  0.7303 \\  
7000-8000 & 314 &1127 &0.7214\\
8000-9000&334 &1148 &0.7091\\ \hline
9000-9877.78 & 286 & 1021 &  0.7199 \\ \hline 
\end{tabular}
\caption{Statistics of the first 10,000 zeros of $\zeta (s)$.}
\label{tabisle4}
\end{table}

For $\zeta(s)$, the fourth column of Table \ref{tabisle4} shows the fraction of zeros which do not lie within the inner islands. We can take this as a proxy for the fraction of zeros
which we know to lie on the critical line.  The mean value of the fraction for the zeros lying up to $t=8000$ is 0.7266, with the standard deviation being 0.0113.This fraction of course is numerically rather than analytically determined, but it is of interest to compare it with the results established by analytic means for the fraction of zeros proven to lie on the critical line. These have progressed from 1/3 (Levinson, 1974)  to 2/5 (Conrey, 1989) and 41\% (Bui, Conrey and Young, 2011).  This comparison shows the interest in analytic proof of
the numerical estimation presented here.

\begin{table}
\begin{tabular}{|c|c|c|c|c|}\hline
 $t$ Range&Inner island zeros& All zeros &1-$L_{-4}(ii)/L_{-4}(all)$\\ \hline
0-1000& 164& 868 & 0.8111\\
1000-2000& 260&1090&0.7615\\
2000-3000& 318& 1172 & 0.7287\\
3000-4000& 305& 1226 & 0.7512\\
4000-5000 & 335 &1267 &0.7356\\
5000-6000&374 &1298 &0.7119\\ 
6000-7000 & 385 & 1326 &  0.7097 \\  
7000-8000 & 357 &1347 &0.7350\\ \hline
8000-8297.64&103 &406&0.7463\\ \hline
\end{tabular}
\caption{Statistics of the first 10,000 zeros of $L_{-4} (s)$.}
\label{tabisle5}
\end{table}

The results for $L_{-4}(s)$ in Table \ref{tabisle5} are quite similar to those for $\zeta (s)$, with the mean fraction of zeros up to $t=8000$ being 0.7431, with a standard deviation of 0.0326. The mean fraction for$\zeta(s)$ lies within one standard deviation of that for  $L_{-4}(s)$.

\begin{figure}[h]
\includegraphics[width=3.0in]{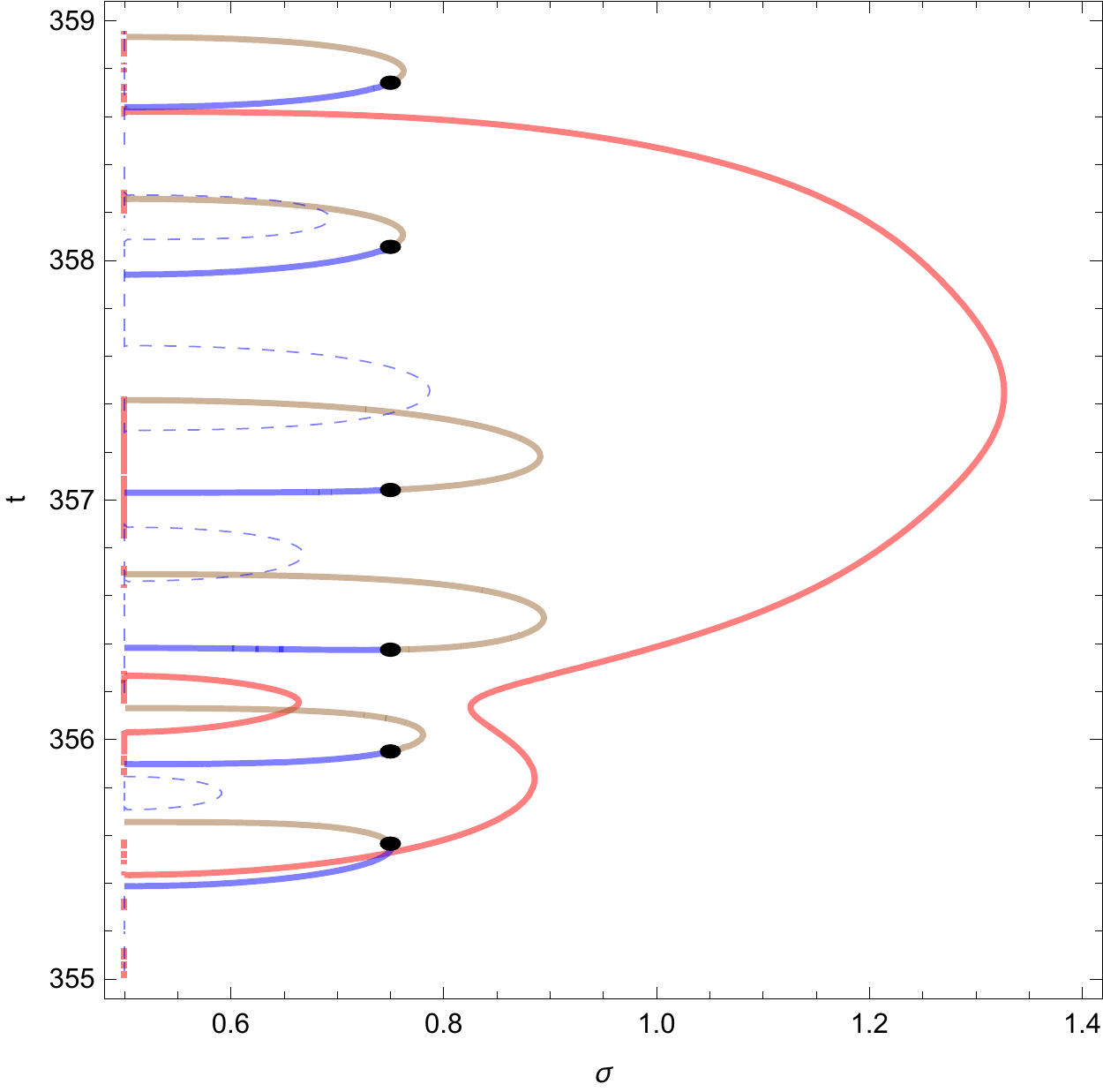} ~~\includegraphics[width=3.0in]{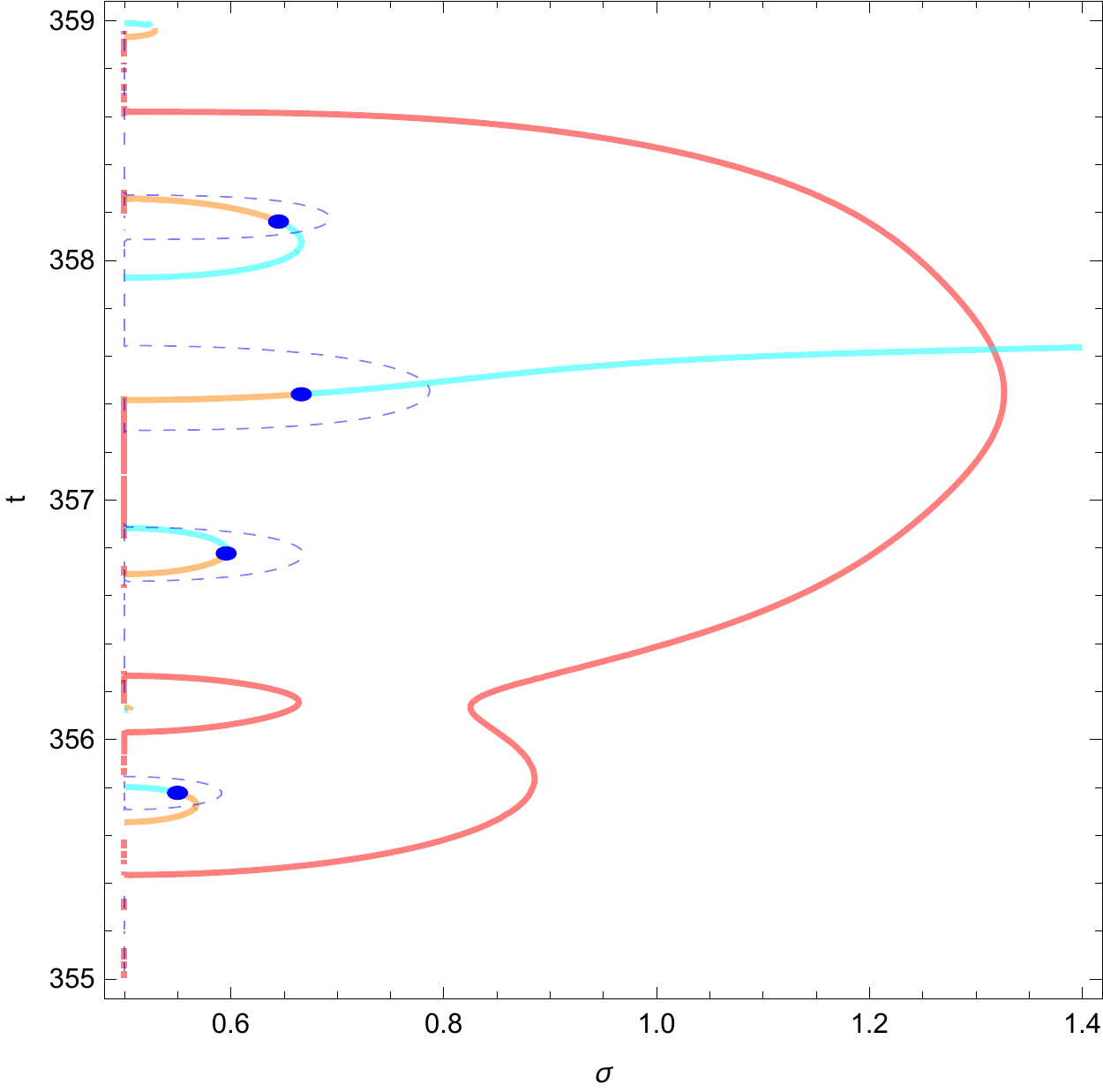} 
\includegraphics[width=3.0in]{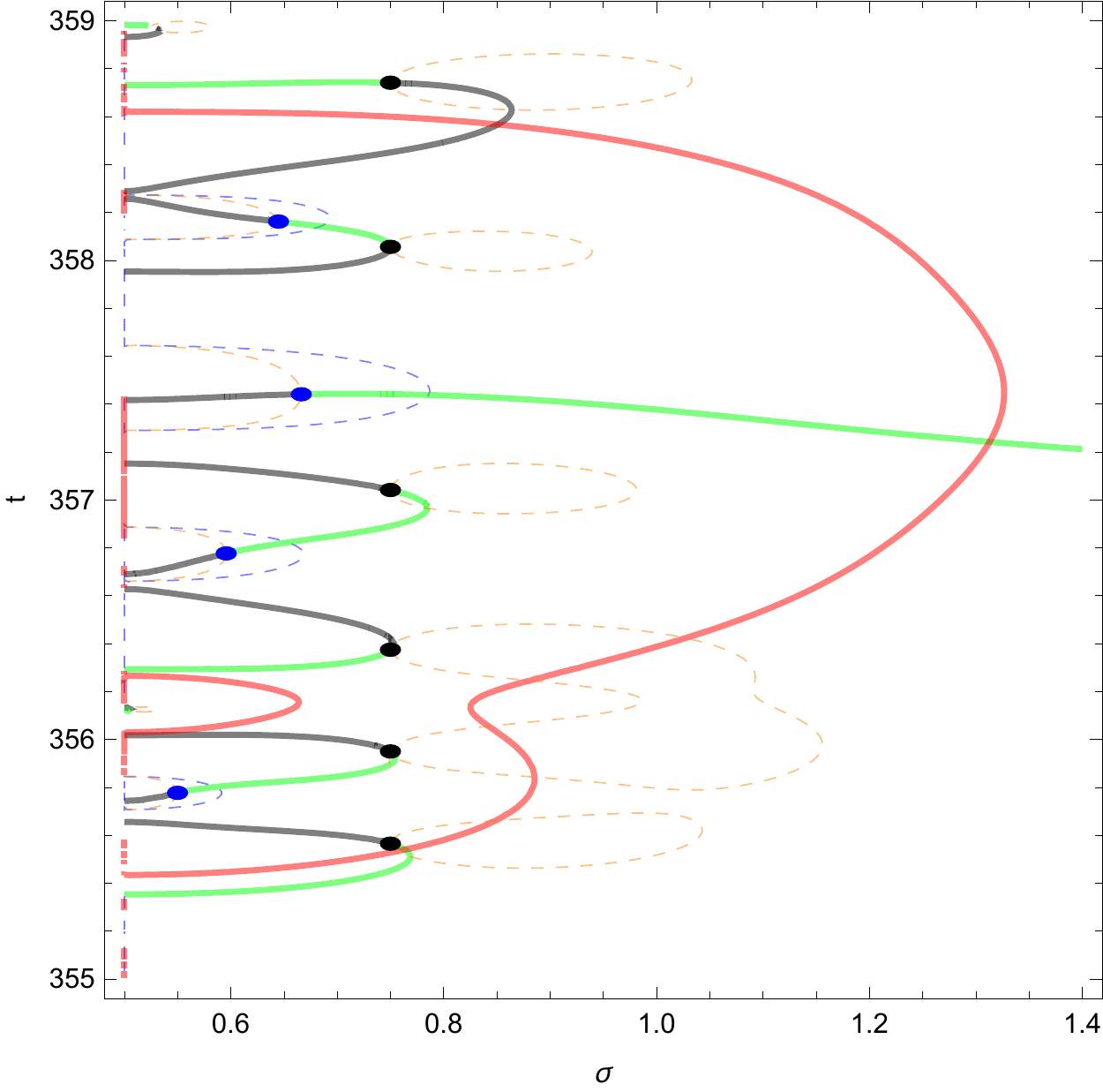}~~\includegraphics[width=3.0in]{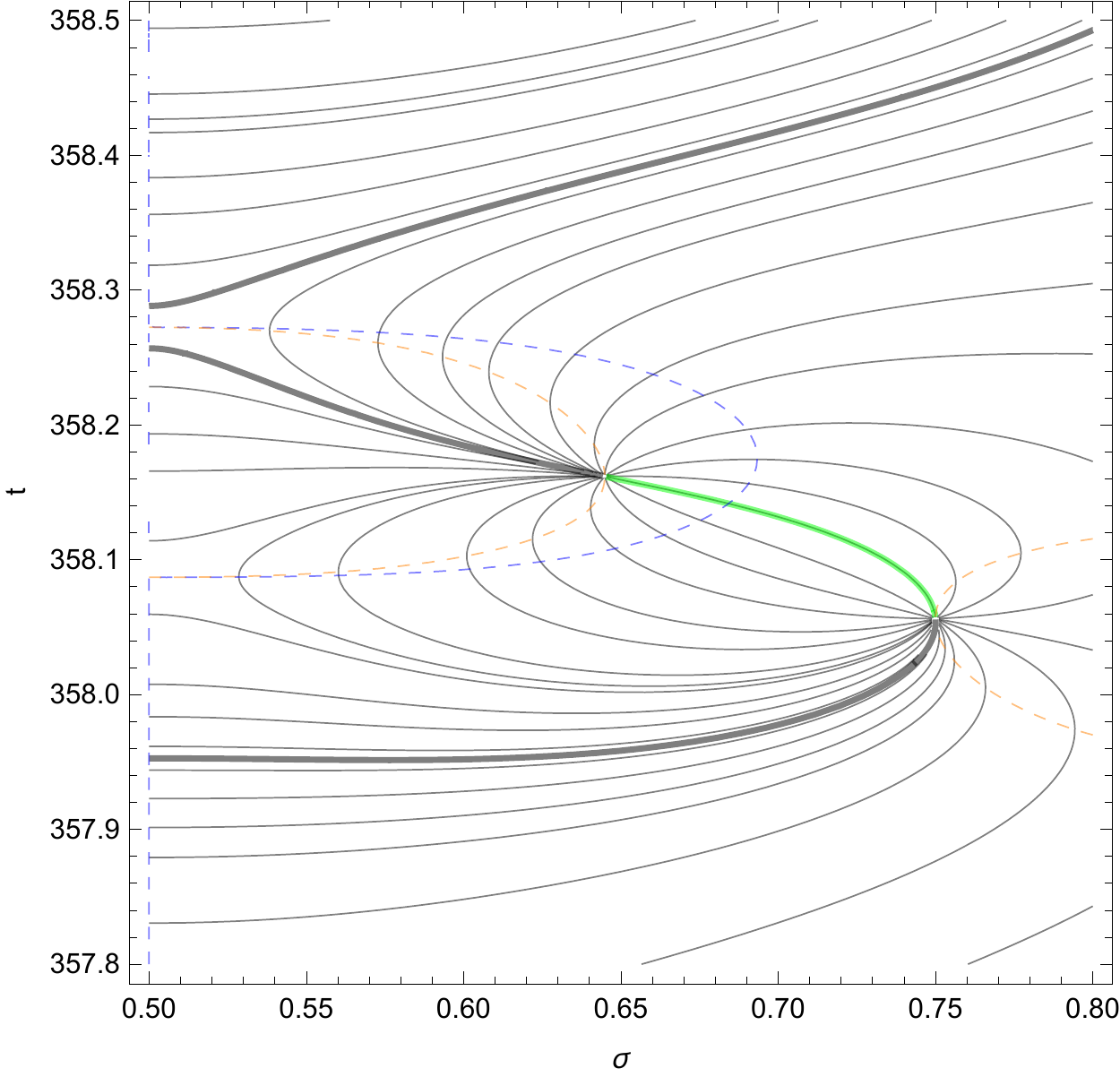}
\caption{(Above,left and right ) Contours of  $\arg{\cal  U}(\sigma+i t)$  and $\arg{\cal  U}_{\cal K}(1,1; \sigma+i t,1)$ in the plane $(\sigma, t)$, with the red contour corresponding to modulus unity for $|{\cal  U}_{\cal K}(1,1; \sigma+i t,1)|$, and the dashed blue contour corresponding to modulus unity for
$|{\cal  U}_{\cal K}(1,1; \sigma+i t,1)/{\cal  U}_{\cal K}(\sigma+i t)|$. At left, brown and blue contours correspond to arguments $0$ and $\pi$ respectively, and at right
aquamarine to 0 and orange to $\pi$.
 The coloured dots correspond to
zeros  of  $\zeta(2 s-1)$ (black) and  ${\cal  U}_{\cal K}(0,0; \sigma+i t,1)$ (blue).
(Below, left and right)  Contours of $\arg{\cal  U}_{\cal K}(1,1; \sigma +i t,1)/{\cal  U}(\sigma+i t)$ (0, green, $\pi$, black curve) at left, with the dashed orange curves corresponding
to the $t$ derivative of the argument being zero. At right, the contours show extra detail of the variation of argument for a region near the top of the island.}
\label{figisland6}
\end{figure}

We now consider the properties of lines of constant argument zero and $\pi$ for the functions ${\cal U}(s)$, ${\cal U}_{\cal K}(1,1;s;1)$ and their ratio ${\cal F}(s)$. These are shown in Fig. \ref{figisland6}, for the same range of $t$ as Fig.\ref{figisland5}. 

For ${\cal U}(s)$, the behaviour of the lines is simple and does not depend on whether $s$ lies inside or outside the island region. They leave the critical line, pass through a zero of $\zeta (2 s-1)$ where they undergo an argument change of $\pi$, and rejoin the critical line. There is a one-to-one correspondence between zeros of $\zeta (2 s-1)$ and lines of argument $\pi$ reaching the critical line. The points where the trajectories reach the critical line are zeros of ${\cal T}_-(s)$ (argument $0$), or ${\cal T}_+(s)$ (argument $\pi$).

For ${\cal U}_{\cal K}(1,1;s)$, the behaviour of the lines is slightly more complicated. Lines of constant argument now pass through zeros of ${\cal K}(1,1;s)$, where they undergo an argument change of $\pi$ before returning to the critical line. The points where the trajectories reach the critical line are zeros of ${\cal K}(0,0;s;1)(s)$ (argument $0$), or ${\cal K}_\lambda(0,0;s)(s)$  (argument $\pi$). One exceptional trajectory runs from the critical line with an argument of $\pi$, passes through a zero
of ${\cal K}(1,1;s)$, emerges with an argument of $0$ and cuts the island boundary at a zero of ${\cal K}(0,0;s)(s)$. Taking into account the four zeros and one pole of ${\cal K}(1,1;s)$ in the island region for $\sigma>1/2$, the change of argument of ${\cal U}_{\cal K}(1,1;s)$ round the closed contour bounding the island in $\sigma \ge 0$ is $6\pi$.

The behaviour of the argument of ${\cal F}(s)$ is more complicated than that of either of its constituents. From Fig. \ref{figisland6}, there are ten lines with $\arg{\cal F}(s)=\pi$ cutting the critical line within the island region. Two are associated with a pole of ${\cal F}(s)$, one at the upper end of the island and the second at the lower end. A third is associated with
a pole of ${\cal F}(s)$ and a line with $\arg{\cal F}(s)=0$ cutting the boundary of the island region. In order of increasing $t$, the ten lines of argument $\pi$ are associated with zeros of the following functions: ${\cal L}$, $L_{-4}$,  $\zeta$, $L_{-4}$, ${\cal L}$, $\zeta$, ${\cal L}$, $\zeta$, ${\cal L}$, $L_{-4}$ (see Table \ref{tabisle2}).

Outside inner island regions, ${\cal U}_{\cal K}(1,1;s)$ is alike   ${\cal U}(s)$, in that both functions have zeros to the right of the critical line, and poles to its left. In consequence,
${\cal F}(s)$ has contours of piecewise-constant argument which proceed from the critical line in $\sigma>1/2$ to either a zero of ${\cal U}_{\cal K}(1,1;s)$ or   ${\cal U}(s)$, and then return to the critical one with an argument changed by $\pi$. The consequence of this is that lines of constant argument $\pi$ reaching the critical line may be placed in one-to-one correspondence with the zeros of either ${\cal K}(1,1;s)$ or $\zeta (2 s-1)$ outside inner island regions. The former correspond to zeros of either ${\cal L}(s)$ or $S_0(s)$, which we know must lie on the critical line outside inner island regions. We also know that the zeros of ${\cal L}(s)$ may be placed in one-to-one correspondence with the zeros of ${\cal T}_-(s)$ and
$\zeta(2 s-1)$ (Ki,2006, Lagarias and Suzuki, 2006, McPhedran and Poulton, 2013), for all values of $t$. However, lines of constant argument may link a point on the critical line outside the island region with a zero off the critical line inside the island region, or a point inside an inner island region to a zero in an outer island region; examples of both are shown in Fig. \ref{figisland6}.

Inside inner island regions, lines of constant argument are influenced by  zeros of ${\cal K}(1,1;s;1)$, and zeros  of $\zeta (2 s-1)$, which must lie outside the inner islands (being poles of ${\cal F}(s)$). Lines of constant argument thus run from the former to the latter, in some cases directly, and in others passing through a point where $\partial {\cal F}(s)/\partial t=0$. The trajectories of such turning points shown in Fig. \ref{figisland6} run from the zero of ${\cal K}(1,1;s;1)$ towards the critical line in one case, and from the zero of $\zeta (2 s-1)$ away from it in the other case. The trajectories $\partial {\cal F}(s)/\partial t=0$ are tangent to the trajectories $|{\cal F}(s)|=1$ at the critical line.

We can strengthen a previous remark, as follows.

{\bf Remark:} $S_0(s)$ has all its zeros on the critical line if and only if lines of argument $\pi$ run from zeros of ${\cal U}_{\cal K}(1,1;s;1)$ inside inner islands to the critical line without intersecting the boundaries of the inner island (the lines where $|{\cal F}(s)|=1$).

The argument of ${\cal F}(s)$ increases monotonically round the boundary of each inner island, so the condition just enunciated is also equivalent to the requirement that
$|{\cal F}(s)|=1$ and $\arg{\cal F}(s)=\pi$ only hold simultaneously on the critical line.

\begin{figure}[h]
\includegraphics[width=3.0in]{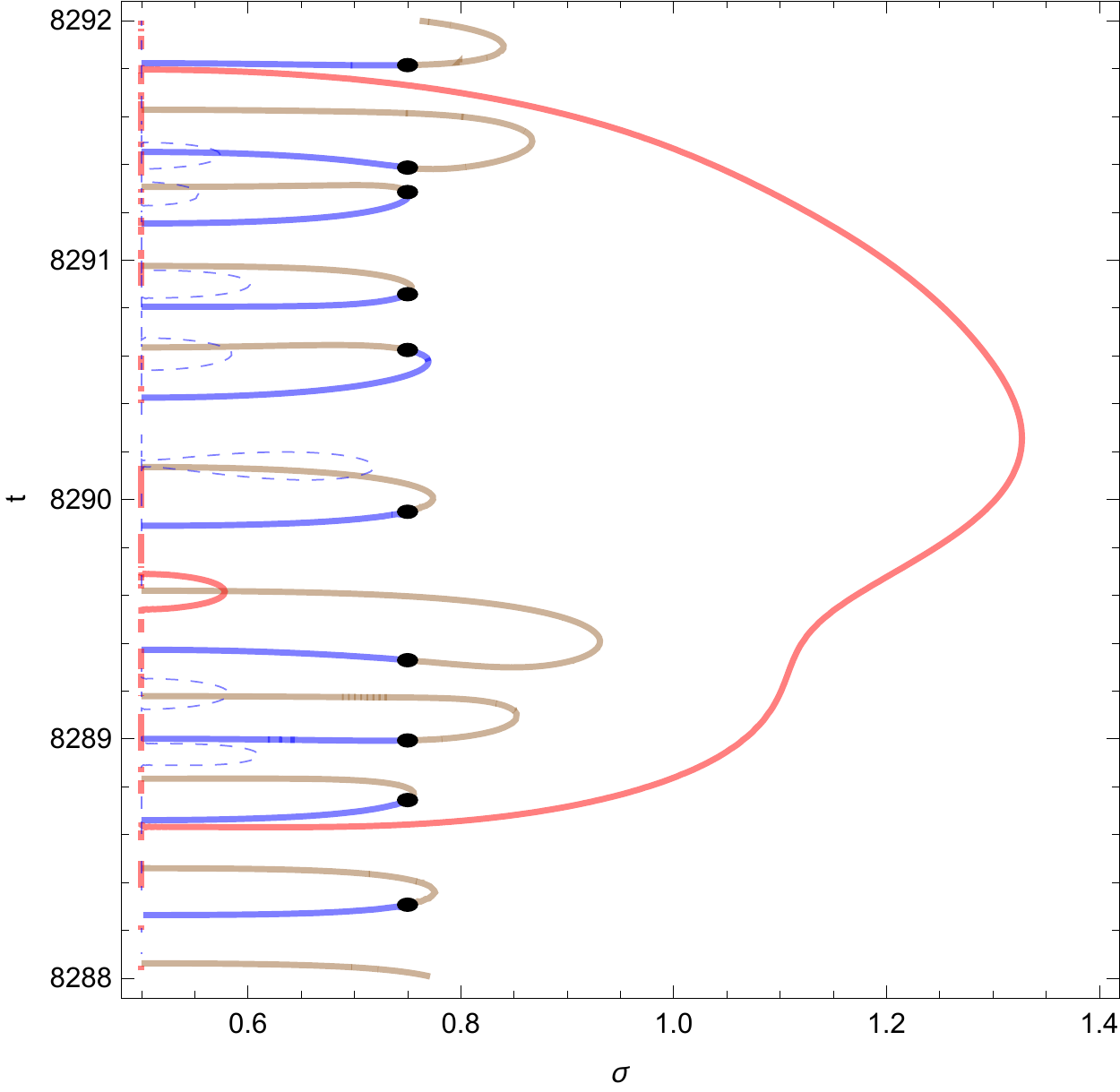} ~~\includegraphics[width=3.0in]{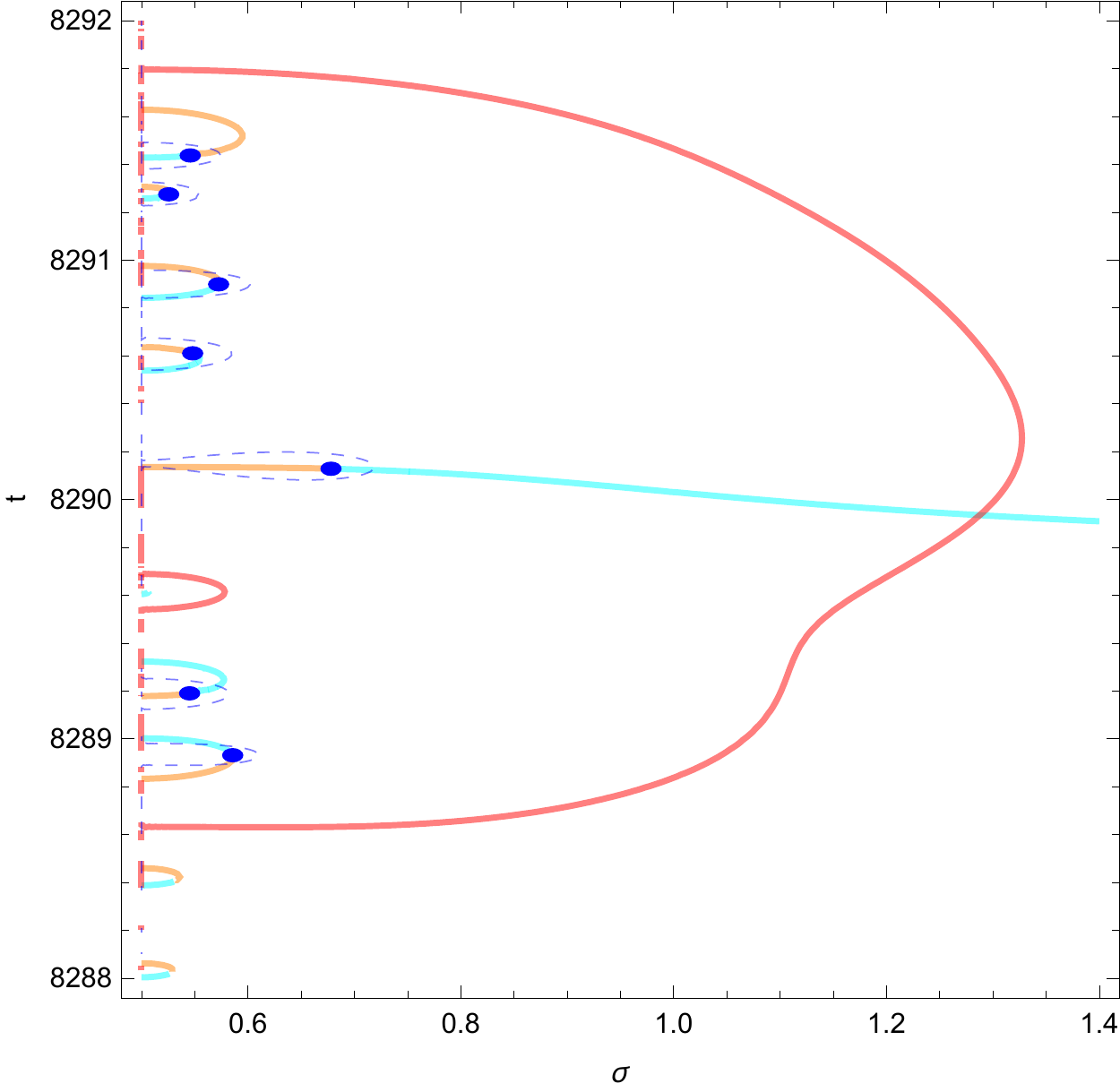} 
\includegraphics[width=3.0in]{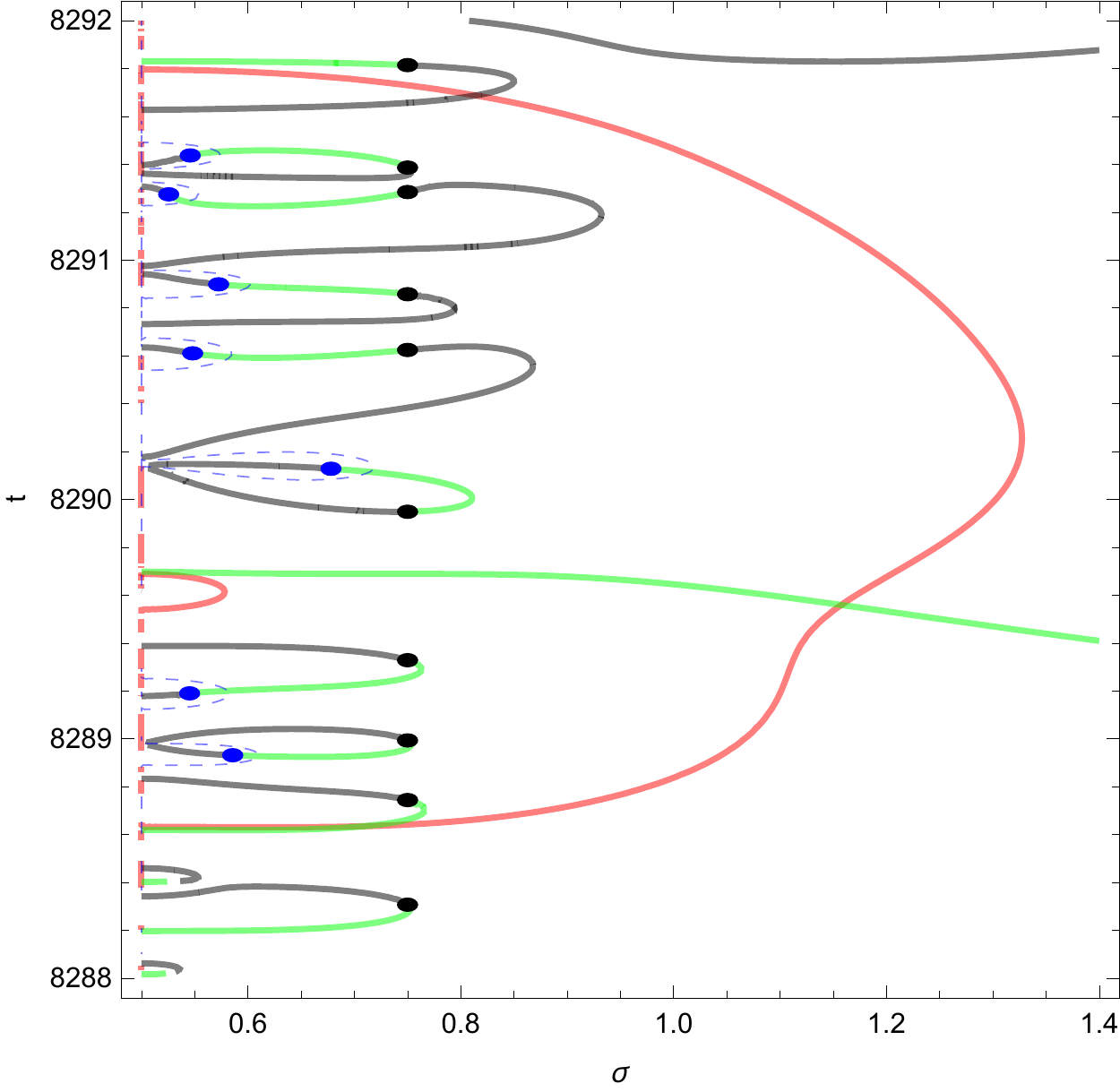}~~\includegraphics[width=3.0in]{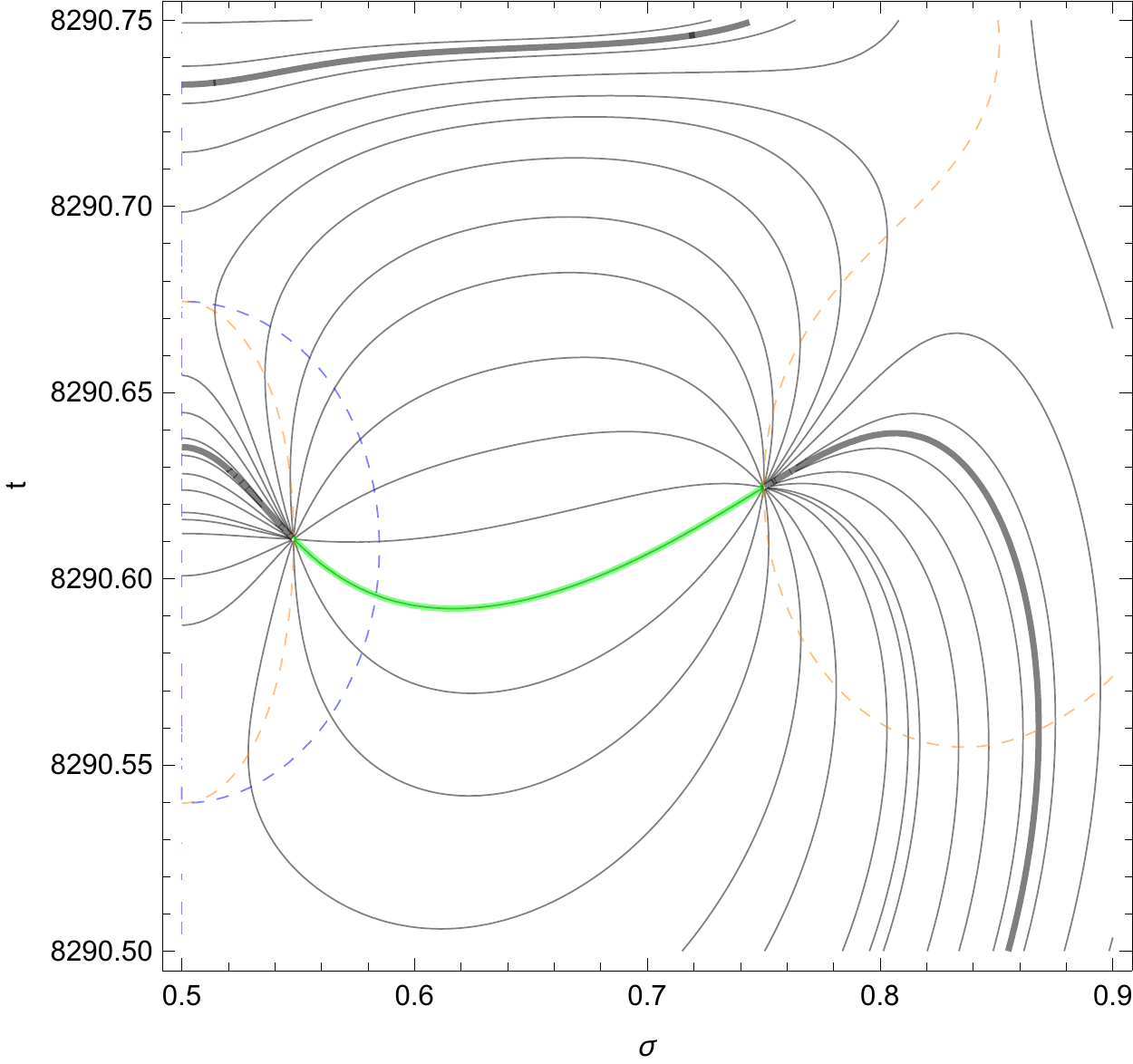}
\caption{(Above,left and right ) Contours of  $\arg{\cal  U}(\sigma+i t)$  and $\arg{\cal  U}_{\cal K}(1,1; \sigma+i t)$ in the plane $(\sigma, t)$, with the red contour corresponding to modulus unity for $|{\cal  U}_{\cal K}(1,1; \sigma+i t)|$, and the dashed blue contour corresponding to modulus unity for
$|{\cal  U}_{\cal K}(1,1; \sigma+i t)/{\cal  U}_{\cal K}(\sigma+i t)|$. At left, brown and blue contours correspond to arguments $0$ and $\pi$ respectively, and at right
aquamarine to 0 and orange to $\pi$.
 The coloured dots correspond to
zeros  of  $\zeta(2 s-1)$ (black) and  ${\cal  U}_{\cal K}(0,0; \sigma+i t)$ (blue).
(Below)  Contours of $\arg{\cal  U}_{\cal K}(1,1; \sigma +i t)/{\cal  U}(\sigma+i t)$ (0, green, $\pi$, black curve), with the dashed orange curves corresponding
to the $t$ derivative of the argument being zero. At right, the contours show extra detail of the variation of argument for a region above the middle of the island.}
\label{figisland7}
\end{figure}
In Fig. \ref{figisland7} we show curves for an island region at a far larger value of $t$, for comparison with Fig. \ref{figisland6}. The general forms of the two figures are similar: some details are changed less than others. For example, the $\sigma$ and $t$ ranges of the two islands are similar, but the numbers of zeros in each are quite different: seven zeros of ${\cal K}(1,1;s)$ plus one enclave zero for the former, by comparison with four and one for the latter. The total number of zeros of $S_0(s)$ and ${\cal L}(s)$ for the former is 17, and 11 for the latter. The $\sigma$ range  of the inner islands tends to be smaller in Fig. \ref{figisland7}  than in Fig. \ref{figisland6}, with the consequence that the distances between adjacent  zero and poles of ${\cal U}_{\cal K}(1,1;s)$ diminishes in comparison with the distance between zeros of  ${\cal U}_{\cal K}(1,1;s)$ and the closest zero of
${\cal U}(s)$.

Before giving the most important analytic arguments of this paper, some definitions are necessary. Number the inner islands with an integer $m$, and let the $m$th inner island segment on the critical line run from $t_l^{(m)}$ up to $t_u^{(m)}$. Let
\begin{equation}
\mu_l^{(m)}=\arg [-{\cal F}(1/2+i t_l^{(m)})],~\mu_u^{(m)}=\arg [-{\cal F}(1/2+i t_u^{(m)})].
\label{fthm1}
\end{equation}
\begin{theorem} The Riemann Hypothesis for $S_0(s)$ holds if and only if $\mu_u^{(m)}<0$, $\mu_l^{(m)}>0$ for all $m$.
\label{lthm}
\end{theorem}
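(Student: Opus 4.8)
The plan is to reduce Theorem \ref{lthm} to the location of a single distinguished point on the boundary of each inner island, and then read off the sign conditions on $\mu_l^{(m)},\mu_u^{(m)}$ from the monotonicity of $\arg[-{\cal F}]$ along the critical line. First I would isolate which points can possibly violate the Riemann Hypothesis for $S_0$. By Theorem \ref{mthm}, every zero of $S_0(s)$ satisfies ${\cal F}(s)=-1$, and conversely a point with ${\cal F}(s)=-1$ is either a zero of ${\cal L}(s)$, which by the result of Lagarias and Suzuki \cite{lagandsuz} must lie on the critical line, or a zero of $S_0(s)$. Since both ${\cal U}(s)$ and ${\cal U}_{\cal K}(1,1;s)$ have unit modulus on the critical line, $|{\cal F}(s)|=1$ there automatically, whereas off the critical line (by the Remark preceding equation (\ref{srep25})) $|{\cal F}(s)|=1$ can hold only on the inner/outer island boundaries. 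Hence every off-line zero of $S_0(s)$ must be an off-line boundary point of some inner island at which $\arg{\cal F}(s)=\pi$, and the RH for $S_0$ is equivalent to the assertion that $\arg{\cal F}=\pi$ holds on no inner-island boundary arc.

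Next I would pin down the distinguished point. Applying the argument principle to ${\cal F}$ on the closed boundary of the $m$th inner island, and using that this island encloses exactly one zero of ${\cal U}_{\cal K}(1,1;s)$ (hence of ${\cal F}$) in $\sigma>1/2$ and no poles (the poles of ${\cal F}$, being zeros of $\xi_1(2s-1)$, lie outside the inner islands), $\arg{\cal F}$ increases by exactly $2\pi$ once around the boundary. Combined with the asserted monotonicity of $\arg{\cal F}$ around the boundary, this forces ${\cal F}(s)=-1$ at exactly one boundary point. That point lies either on the critical-line segment $[t_l^{(m)},t_u^{(m)}]$, where it is a genuine critical-line zero of $S_0$ or ${\cal L}$, or on the arc, where it is an off-line zero of $S_0$ and RH fails. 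Thus the RH for $S_0$ holds if and only if, for every $m$, this unique ${\cal F}=-1$ point lies on the critical-line segment rather than on the arc.

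It remains to convert this into the stated sign conditions. On the critical-line segment the inner-island criterion (\ref{srep26}) gives $\partial_t\arg{\cal F}(1/2+it)<0$, so $\arg[-{\cal F}(1/2+it)]$ decreases strictly in $t$ from $\mu_l^{(m)}$ at $t_l^{(m)}$ to $\mu_u^{(m)}$ at $t_u^{(m)}$; note also that $t_l^{(m)}$ and $t_u^{(m)}$ are turning points, so $\mu_l^{(m)}$ sits at a local maximum and $\mu_u^{(m)}$ at a local minimum of this function. The distinguished point then lies in the open segment precisely when this decreasing function passes through $0\pmod{2\pi}$, i.e. when $\mu_l^{(m)}>0>\mu_u^{(m)}$; in every other case the unique crossing is pushed onto the arc. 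Quantifying over all $m$ and combining with the previous step yields the equivalence claimed in Theorem \ref{lthm}.

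The hard part will be the two global ingredients underlying the counting. First, one must establish rigorously that each inner island contains exactly one zero and no poles of ${\cal F}$, so that the winding is exactly $2\pi$ and the boundary ${\cal F}=-1$ point is truly unique; this uniqueness is precisely what makes ``crossing on the segment'' equivalent to ``no crossing on the arc,'' and hence drives both directions of the iff. Second is the branch bookkeeping: I must show that the principal values $\mu_l^{(m)},\mu_u^{(m)}\in(-\pi,\pi]$ are the correct representatives, which needs the total argument change along the segment to stay below $2\pi$ and the endpoint values to lie in $(-\pi,\pi)$. Both facts should follow from the observed smallness of the inner islands together with the turning-point (local max/min) structure of $\arg[-{\cal F}]$ at $t_l^{(m)},t_u^{(m)}$, but making the ``monotone increase around the whole boundary'' claim fully rigorous near these tangency points of the curves $|{\cal F}|=1$ and $\partial_t{\cal F}=0$ is where the genuine analytic difficulty resides.
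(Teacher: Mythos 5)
Your proposal follows essentially the same route as the paper's own proof: both reduce the statement to the existence of a single point on the boundary $\Gamma_+^{(m)}$ of each inner island where ${\cal F}(s)=-1$ (hence $S_0(s)=0$ or ${\cal L}(s)=0$), use the monotonic decrease of $\arg[-{\cal F}(1/2+it)]$ in $t$ along the inner-island segment to show this point lies on the critical line exactly when $\mu_l^{(m)}>0>\mu_u^{(m)}$, and invoke the earlier remark that all zeros away from inner-island boundaries already lie on the critical line. Your explicit argument-principle count (one zero, no poles, winding $2\pi$) is a welcome sharpening of the uniqueness claim the paper simply asserts from monotonicity, and you correctly identify that the unproved monotonicity of $\arg{\cal F}$ around the inner-island boundary is where the real analytic burden sits in both versions.
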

\begin{proof}
If $\mu_u^{(m)}<0$, $\mu_l^{(m)}>0$ for a given $m$, then, as $\arg [-{\cal F}(1/2+i t)]$ increases monotonically as $t$ decreases inside the inner island, there is a point on the critical line between  $t_u^{(m)}$ and  $t_l^{(m)}$ where  $\arg [-{\cal F}(1/2+i t)]=0$. This is then the single point on the boundary $\Gamma_+^{(m)}$ of that part of the inner island in $\sigma\ge 1/2$ where either $S_0(s)=0$ or ${\cal L}(s)=0$. Given this holds for all $m$, all zeros of $S_0(s)$ and ${\cal L}(s)$ in inner islands lie on the critical line. We also know that
all zeros of $S_0(s)$ in enclaves or outside inner islands (i.e. away from the boundaries of inner islands) lie on the critical line, completing this part of the proof.
If the Riemann Hypothesis holds for $S_0(s)$, we know it also holds for ${\cal L}(s)$. Every inner island $m$ has one part on its boundary $\Gamma_+^{(m)}$ where $\arg [-{\cal F}(s)=0$: this must lie between  $t_u^{(m)}$ and  $t_l^{(m)}$. Given $\arg [-{\cal F}(s)]$ increases as one goes from the former to the latter, then $\mu_u^{(m)}<0$ and  $\mu_l^{(m)}>0$.
\end{proof}
\begin{corollary}
If between every two inner islands there exists at least one point on the critical line where $\arg[-{\cal F}(s)]=0$, i.e. one point on the critical line where either $S_0(s)=0$ or
${\cal L}(s)=0$, then the Riemann Hypothesis holds for $S_0(s)$.
\label{c1lthm}
\end{corollary}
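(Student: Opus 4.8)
The plan is to deduce the corollary from Theorem \ref{lthm} by showing that its hypothesis forces the endpoint conditions $\mu_u^{(m)}<0$ and $\mu_l^{(m)}>0$ of (\ref{fthm1}) to hold for every $m$. The single analytic input needed is the sign of $\partial_t\arg[-{\cal F}(1/2+it)]$ along the critical line. By (\ref{srep26}) this derivative is negative inside every inner-island segment $(t_l^{(m)},t_u^{(m)})$, while in the extended regions the derivatives of $\arg{\cal U}_{\cal K}(1,1;s)$ and $\arg{\cal U}(s)$ carry opposite signs (the remark following (\ref{srep24})) and in the outer-island regions $\partial_t\arg{\cal F}>0$ directly; since ${\cal F}={\cal U}_{\cal K}(1,1;s)/{\cal U}(s)$, in either case $\partial_t\arg{\cal F}>0$ throughout each gap $(t_u^{(m)},t_l^{(m+1)})$ separating two consecutive inner islands. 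Thus $\arg[-{\cal F}(1/2+it)]$ attains a local minimum, of value $\mu_u^{(m)}$, at each island top $t_u^{(m)}$ and a local maximum, of value $\mu_l^{(m)}$, at each island bottom $t_l^{(m)}$.

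First I would record that the hypothesis produces, in each gap, a genuine ascending zero crossing. On the critical line $|{\cal U}(s)|=1$, and the reflection ${\cal U}_{\cal K}(1,1;1-s)=1/{\cal U}_{\cal K}(1,1;s)$ of (\ref{symuv}) together with Schwarz reflection gives $|{\cal U}_{\cal K}(1,1;s)|=1$ as well, so that $|{\cal F}|=1$ identically on the line. Hence a point with $\arg[-{\cal F}]=0$ is precisely a point with ${\cal F}=-1$, which by Theorem \ref{mthm} is a zero of $S_0(s)$ or of ${\cal L}(s)$; this is the content of the parenthetical reformulation in the statement. Because $\arg[-{\cal F}]$ is strictly increasing across the gap, such a point is an isolated crossing of zero from below.

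The core step is the sandwich. Fix a gap $(t_u^{(m)},t_l^{(m+1)})$ and let $t^\ast$ be a crossing supplied by the hypothesis. Monotone increase of $\arg[-{\cal F}]$ on the gap places its value below zero at the lower endpoint and above zero at the upper endpoint, i.e. $\mu_u^{(m)}<0$ and $\mu_l^{(m+1)}>0$. Reading this off the gap above island $m$ and off the gap below island $m$ gives $\mu_u^{(m)}<0$ and $\mu_l^{(m)}>0$ simultaneously; as the hypothesis covers every gap, both inequalities hold for all $m$, and Theorem \ref{lthm} then yields the Riemann Hypothesis for $S_0(s)$.

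The main obstacle is that this last deduction is clean only for a continuously tracked argument, whereas the quantities $\mu_u^{(m)},\mu_l^{(m)}$ in (\ref{fthm1}) are principal values. One must therefore control the total winding: the inequalities survive reduction modulo $2\pi$ only if $\arg[-{\cal F}]$ accumulates no spurious multiple of $2\pi$ between an island boundary and the crossing $t^\ast$. The natural way to secure this is to bound the variation of $\arg[-{\cal F}]$ across a single inner-island segment by $2\pi$ --- which follows from the monotone argument increase of exactly $2\pi$ around the closed boundary $\Gamma_+^{(m)}$ in Theorem \ref{thmisland}, there being a single enclosed zero of ${\cal U}_{\cal K}(1,1;s)$ and, since the poles of ${\cal F}$ are the zeros of $\zeta(2s-1)$ that lie outside the inner islands, no enclosed pole --- and then to show that the guaranteed gap crossing lies in the same $2\pi$ band as the two flanking extrema. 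Ruling out the alternative configuration, in which an inner island's argument values fall strictly between two consecutive multiples of $2\pi$ while its flanking gaps cross different multiples, is the crux; establishing the requisite lockstep between the per-island winding and the gap crossings is where the argument must be made rigorous.
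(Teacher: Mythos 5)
Your proposal is correct and follows essentially the same route as the paper: in each gap between consecutive inner islands, $\arg[-{\cal F}(1/2+it)]$ increases with $t$, so the hypothesised zero crossing forces $\mu_u^{(m)}<0$ below it and $\mu_l^{(m+1)}>0$ above it, and Theorem~\ref{lthm} then gives the result. The only difference is your closing paragraph flagging the principal-value/winding issue, a legitimate point of rigour that the paper's own proof also passes over silently by tacitly treating $\arg[-{\cal F}]$ as continuously tracked from the nearest crossing.
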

\begin{proof}
Given an inner island $m$, it has at least one point with $t>t_u^{(m)}$ where $\arg [-{\cal F}(s)]=0$, and one with $t<t_l^{(m)}$. Going from the nearest such point above down to
 $t_u^{(m)}$, $\arg [-{\cal F}(s)]$ decreases, and so $\arg[-{\cal F}(1/2+i t_u^{(m)})]<0$. Going from the nearest such point below up to
 $t_l^{(m)}$, $\arg [-{\cal F}(s)]$ increases, and so $\arg[-{\cal F}(1/2+i t_l^{(m)})]>0$. Thus, for all $m$, $\mu_u^{(m)}<0$ and  $\mu_l^{(m)}>0$, so the Riemann Hypothesis holds
 for $S_0(s)$.
\end{proof}
\begin{corollary}
If the Riemann Hypothesis holds for $S_0(s)$, then between every two inner islands there exists at least one point on the critical line where $\arg[-{\cal F}(s)]=0$, i.e. one point on the critical line where either $S_0(s)=0$ or
${\cal L}(s)=0$.
\label{c2lthm}
\end{corollary}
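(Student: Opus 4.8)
The plan is to establish Corollary \ref{c2lthm} as the exact converse of Corollary \ref{c1lthm}, by running the same intermediate-value argument along the critical line but now \emph{deducing} an inter-island crossing from the Riemann Hypothesis rather than the reverse. The ground fact I would use repeatedly is that on the critical line $|{\cal F}(1/2+it)|=1$: indeed $|{\cal U}(1/2+it)|=1$ (noted below (\ref{Vdef})), while $|{\cal U}_{\cal K}(1,1;1/2+it)|=1$ follows from the symmetry ${\cal U}_{\cal K}(1,1;1-s)=1/{\cal U}_{\cal K}(1,1;s)$ of (\ref{symuv}) together with $\overline{1/2+it}=1-(1/2+it)$. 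Hence $\arg[-{\cal F}(1/2+it)]=0$ modulo $2\pi$ holds precisely when ${\cal F}(1/2+it)=-1$, which by Theorem \ref{mthm} means $S_0=0$ or ${\cal L}=0$ there; this is the ``i.e.'' in the statement, and it is the endpoint values $\mu_u^{(m)},\mu_l^{(m+1)}$ of (\ref{fthm1}) that I will track.

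First I would fix two consecutive inner islands, the $m$th meeting the critical line in $[t_l^{(m)},t_u^{(m)}]$ and the $(m{+}1)$th in $[t_l^{(m+1)},t_u^{(m+1)}]$, and focus on the intervening segment $(t_u^{(m)},t_l^{(m+1)})$, which by consecutiveness contains no inner-island points. Assuming the Riemann Hypothesis for $S_0$, Theorem \ref{lthm} supplies the endpoint signs $\mu_u^{(m)}=\arg[-{\cal F}(1/2+it_u^{(m)})]<0$ and $\mu_l^{(m+1)}=\arg[-{\cal F}(1/2+it_l^{(m+1)})]>0$. Next I would invoke the monotonicity already exploited in the proof of Corollary \ref{c1lthm}, namely that on the critical line outside the inner islands $\arg[-{\cal F}(1/2+it)]$ is continuous and strictly increasing in $t$. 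Taking the continuous branch equal to $\mu_u^{(m)}\in(-\pi,0)$ at $t_u^{(m)}$, monotone increase forces its value at $t_l^{(m+1)}$ to be $\mu_l^{(m+1)}+2\pi k$ with $k\ge0$, hence strictly positive; the intermediate value theorem then yields $t^{*}\in(t_u^{(m)},t_l^{(m+1)})$ with $\arg[-{\cal F}(1/2+it^{*})]$ a multiple of $2\pi$, so ${\cal F}=-1$ and thus $S_0=0$ or ${\cal L}=0$ at $t^{*}$. This is exactly the required inter-island point, completing the converse.

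The main obstacle is the clean justification, with correct $2\pi$ bookkeeping, of the lemma that $\arg[-{\cal F}(1/2+it)]$ is continuous and strictly increasing across the \emph{whole} segment, not just near the two island boundaries. The route I would take rests on the subsidiary fact that $\arg{\cal U}(1/2+it)$ is monotone decreasing for $t>3.9125$---the same winding property of ${\cal U}$ that forces the zeros of ${\cal T}_\pm$ onto the line. Granting this, the three region types along the segment are handled uniformly through $\arg{\cal F}=\arg{\cal U}_{\cal K}-\arg{\cal U}$: in outer-island intervals $\partial_t\arg{\cal F}>0$ directly from the classification (\ref{srep26}), while in extended intervals and in enclave sub-intervals (where $\partial_t\arg{\cal U}_{\cal K}>0$, as in Fig.\ \ref{figisland2}) the decrease of $\arg{\cal U}$ makes $\arg{\cal F}$ increasing as well. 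Enclaves thus sit \emph{outside} the inner islands for ${\cal F}$ and do not reverse the trend; their extra crossings carry zeros of ${\cal L}$, which already lie on the line by Lagarias--Suzuki. The one remaining continuity check is that points of the segment where ${\cal K}(1,1;\cdot)$ vanishes on the line (so that both numerator and denominator of ${\cal U}_{\cal K}$ vanish) are removable, modulus-one singularities rather than genuine poles, so that $\arg[-{\cal F}]$ passes through them without a jump. Once this lemma is secured the intermediate-value step is immediate, and the single crossing produced is consistent with the one-point-per-island accounting of Theorem \ref{lthm}.
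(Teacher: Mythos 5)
Your proposal follows essentially the same route as the paper's own proof: invoke Theorem \ref{lthm} to obtain $\mu_u^{(m)}<0$ and $\mu_l^{(m+1)}>0$, then apply an intermediate-value argument on the inter-island segment to locate a point where $\arg[-{\cal F}(1/2+it)]=0$, hence $S_0=0$ or ${\cal L}=0$. Your additional care over the $2\pi$ branch bookkeeping and the monotonic increase of $\arg[-{\cal F}]$ between islands makes explicit a step the paper leaves implicit (and genuinely needs), but it is a refinement of the same argument rather than a different one.
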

\begin{proof}
If the Riemann Hypothesis holds, then $\mu_u^{(m)}<0$, $\mu_l^{(m)}>0$ for all $m$. Thus, for every inner island $m$, $\mu_u^{(m)}<0$ and $\mu_l^{(m+1)}>0$. Thus, there exists
at least one point on the critical line between $t_u^{(m)}$ and $t_l^{(m+1)}$ where $\arg[-{\cal F}(s)]=0$.
\end{proof}

{\bf Acknowledgement:} The author acknowledges helpful comments from Dr. Masatoshi Suzuki on early versions of this work.

\end{document}